\newtheorem{theorem}{Theorem}
\newtheorem{prop}{Proposition}
\newtheorem*{remark}{Remark}
\newtheorem{prop/defi}{Proposition/Definition}
\newtheorem{theor/defi}{Theorem/Definition}
\newtheorem{defi}{Definition}
\newtheorem{coro}{Corollary}
\lstdefinelanguage{Sage}[]{Python}
{morekeywords={False,sage,True},sensitive=true}
\definecolor{dblackcolor}{rgb}{0.0,0.0,0.0}
\definecolor{dbluecolor}{rgb}{0.01,0.02,0.7}
\definecolor{dgreencolor}{rgb}{0.2,0.4,0.0}
\definecolor{dgraycolor}{rgb}{0.30,0.3,0.30}
\newcommand{\longsquiggly}{\xymatrix{{}\ar@{~>}[r]&{}}}
\newcommand {\Z}{\mathbb Z}
\def\f2{{\mathbb F}_{2}}
\def\fn{{\mathbb F}_{2^n}}
\newcommand{\ff}[1]{{\mathbb F}_{2}^{#1}}
\newcommand{\FF}[1]{{\mathbb F}_{2^{#1}}}
\title{On Dillon's property of $(n,m)$-functions}
\author{Matteo Abbondati$^{1,2}$, Marco Calderini$^2$, Irene Villa$^2$}
\date{}
\begin{document}

\maketitle
\begin{center}
	$^1$LIRMM, University of Montpellier, CNRS, Montpellier, France.\\
	$^{2}$ University of Trento, Via Sommarive, 14, 38123 Povo, Trento, Italy.\\ matteo.abbondati@umontpellier.fr,$\{$marco.calderini,irene.villa$\}$@unitn.it
\end{center}

\begin{abstract}
    Dillon observed that an APN function $F$ over $\ff{n}$ with $n$ greater than $2$ must satisfy 
the condition $\{F(x) + F(y) + F(z) + F(x + y + z) \,:\, x,y,z \in\ff n\}= \ff n$. Recently, Taniguchi (2023) generalized this condition to functions defined from $\ff n$ to $\ff m$, with $m>n$, calling it the D-property.
Taniguchi gave some characterizations of APN functions satisfying the D-property and provided some families of APN functions from $\ff n$ to $\ff{n+1}$ satisfying this property.
In this work, we further study the D-property for  $(n,m)$-functions with $m\ge n$. We give some combinatorial bounds on the dimension $m$ for the existence of such functions. Then, we characterize the D-property in terms of the Walsh transform and for quadratic functions we give a characterization of this property in terms of the ANF. We also give a simplification on checking the D-property for quadratic functions, which permits to extend some of the APN families provided by Taniguchi. We further focus on the class of the plateaued functions, providing conditions for the D-property.
\end{abstract}

\section{Introduction}
A vectorial Boolean function, or $(n,m)$-function, is a function $F:\ff n\to \ff m$, where $\ff n$ denotes the $n$ dimensional vector space over the binary field $\ff{}$. 
Vectorial Boolean functions play an important role in many different areas of mathematics and computer science.

In cryptography and, in particular, in the design of block ciphers, vectorial Boolean functions are of critical importance. Virtually, all modern block ciphers incorporate $(n,m)$-functions (usually called \emph{S-boxes}) which represent the only nonlinear components involved in the operations of the cipher, and as such, the security of the encryption directly depends on the properties of these functions. 

Among the most efficient attacks on block ciphers we have the differential cryptanalysis \cite{BS90}, based on the study of how differences in input can affect the resultant differences in the output. 

The resistance to differential attacks for an $(n,m)$-function $F$ is high when the value 
$$\delta_F = \max_{a\ne0\in \ff n,b\in\ff m} |\{x\in\ff n\,:\, F(x+a)+F(x)=b\}|$$
is small.

When $n \le m$, the differential uniformity of an $(n, m)$-function is at least $2$, and functions meeting this bound are called almost perfect nonlinear (APN). 

Usually, APN functions have been studied for the case $m=n$, see \cite{carlet,CBC21}  for known APN functions. 
However, studying vectorial Boolean functions with a number of output bits greater than the number of input bits is also of interest. In \cite{C23}, Carlet studied $(n,n+1)$-functions for constructing balanced functions with good nonlinearity.
Also in \cite{ACCSZ19}, the authors studied some cryptographic properties of certain Feistel networks involving $(n,m)$-functions, with $n<m$, as S-boxes. In particular, in \cite{ACCSZ19} it is given an example of a cipher using an APN function from $4$ bits to $5$ bits as an S-box.

 Recently,  Taniguchi in \cite{taniguchi} focused on the study of APN $(n,n+1)$-functions as well. 
His work was based on an observation noted by Dillon,  as
Carlet reported in his book \cite[pg. 381]{carlet},  that is, an APN function $F:\ff n\to \ff n$ must satisfy the following condition:
$$
F(x)+F(y)+F(z)+F(x+y+z)=c
$$
admits a solution $(x,y,z)\in (\ff{n})^3$ for any $c\in\ff n$.
Taniguchi in \cite{taniguchi} referred to this condition as the \emph{D-property}, and he studied APN $(n,m)$-functions, with $m>n$, which satisfy the D-property.
In particular, the author focused on the case of $(n,n+1)$-functions and obtained several classes of APN functions defined from $\ff n$ to $\ff{n+1}$ having the D-property.

In this work, we further study  $(n,m)$-functions  satisfying the D-property.
In the first part, we give an upper bound on the dimension $m$ for the existence of these APN functions. We also characterize general $(n,m)$-functions satisfying the D-property in terms of their difference distribution table and in terms of their fourth Walsh moment.
Then, we focus on quadratic  $(n,m)$-functions. We provide a characterization of the D-property for these functions in terms of their Walsh transform and in terms of their algebraic normal form. Moreover, we give a simplification for checking the D-property for quadratic  functions. This permits to extend some of the APN families provided by Taniguchi.
Afterwards, we  characterize plateaued vectorial Boolean functions having the D-property and we study a particular case of strongly plateaued APN functions.
In the last part, we describe a connection of some results obtained with the higher-order differentiability \cite{LAI}.

\section{Preliminaries}\label{sec:2}
We recall some standard notation. Given a set $A$, with $|A|$ we denote its size and with $A^*$ we denote $A\setminus\{0\}$. With $\Z$ we identify the ring of integers and with  $\f2$  the finite field with 2 elements.
For $n$ a positive integer, we denote with $\ff n$ the $n$-dimensional vector space over $\f2$ and with $\FF n$ the finite field with $2^n$ elements. 

In general, when not specified, $n$ and $m$ are two positive integers. An $(n,m)$-function $F$ is a map from $\ff n$ to $\ff m$. If $F(0)=0$ we say that the function is \emph{normalized}.
For a nonzero $v\in\ff m$,  the $(n,1)$-function $F_v=v\cdot F$ is called a \emph{component} of $F$, where ``$\cdot$'' is a chosen inner product in $\ff m$.
The \emph{algebraic normal form} of $F$ (shortly ANF) is its (unique) representation as an $n$-variable polynomial 
$$F(x_1,\ldots,x_n)=\sum_{I\subseteq\{1,\ldots,n\}}a_I\prod_{i\in I}x_i\in\ff m[x_1,\ldots,x_n]/(x_1^2+x_1,\ldots,x_n^2+x_n).$$ Sometimes we  identify the set of variables $(x_1,\ldots,x_n)$ simply with $x$.

Given the isomorphism between $\ff n$ and $\FF n$,
 $(n,m)$-functions can also be viewed as taking their inputs in $\FF n$ and, if $m$ divides $n$, the output can be expressed as a polynomial function of the input. 
In general, an $(n,n)$-function $F$ can be also represented as a polynomial of the form
$$F(X)=\sum_{i=0}^{2^n-1}b_iX^i\in\fn[X].$$ This is called the \emph{univariate representation} of $F$, and it is unique.

In this work, we make use of both representations. It will be clear from the context which one is under consideration. 

The \emph{algebraic degree} of $F$ ($\deg(F)$) is equivalently the largest binary weight of an exponent $i$ with $b_i\ne0$ for the univariate representation, or the largest size of a set $I$ with $a_I\ne0$ for the ANF.
Then,  $F$ is called \emph{quadratic} if $\deg(F)\le2$, \emph{affine} if $\deg(F)\le1$ and \emph{linear} if normalized with $\deg(F)\le1$.
For $m|n$, a well known example of linear $(n,m)$-function is the \emph{trace} map:
$$Tr^n_m(X)=\sum_{i=0}^{n/m-1}X^{2^{im}}.$$
To simplify the notation, when $m=1$ we denote it as $Tr_n(X)=\sum_{i=0}^{n-1}X^{2^i}$, often called the \emph{absolute trace function}.
Note that, given the isomorphism between $\ff n$ and $\FF n$, the trace map $Tr^n_m$ can be also viewed as a map from $\ff n$ to $\ff m$.

In the following, we  present some well known results and definitions on vectorial Boolean functions considered as maps over binary vector spaces. We remind to the reader that these notions can be translated to the finite field case. We refer the interested reader to \cite{carlet} for more results on vectorial Boolean functions.

Given an $(n,m)$-function $F$ and a nonzero element $a\in\ff n$, the \emph{derivative of $F$ in the direction $a$} is the map $D_aF:\ff n\rightarrow\ff m$ defined as $D_aF(x)=F(x+a)+F(x)$.
It is possible to iterate the derivative of $F$ obtaining the \emph{second order derivative} of $F$ in the directions $a,a'$  as follows:
$$D_{a,a'}^2F(x)=D_{a'}\left(D_aF(x)\right)=F(x+a+a')+F(x+a')+F(x+a)+F(x).$$ 
 Therefore, the second order derivative satisfies the following relation 
 \begin{equation}\label{eq:der2}
    D_{a,a'}^2F(x)=D_aF(x)+D_aF(x+a').
\end{equation}

The \emph{difference distribution table} of $F$ ($\texttt{DDT}_F$) is the $(2^n-1)\times2^m$ table, labeled by elements  $(a,b)\in\ff n\setminus\{0\}\times\ff m$, defined as
$$\texttt{DDT}_F(a,b)=|\{x\in\ff n : D_aF(x)=b\}|.$$
The \emph{differential uniformity} of $F$ ($\delta_F$) is then defined as the maximal value in $\texttt{DDT}_F$. A function with differential uniformity $\delta$ is said to be \emph{$\delta$-uniform}.
All the entries of the \texttt{DDT}, hence the differential uniformity, are necessarily even.
Nyberg's bound \cite{Nyberg}  tells us that $\delta_F\ge2^{n-m}$, so for $(n,n)$-functions the lowest possible value for the differential uniformity is 2. Functions attaining this bound are called \emph{almost perfect nonlinear} (APN).

Another important definition for vectorial Boolean functions is the \emph{Walsh transform}.
Assume that an inner product in $\ff n$ and an inner product in $\ff m$ have been chosen, both denoted by ``$\cdot$''.
The \emph{Walsh transform} of an $(n,m)$-function $F$ ($\mathcal W_F$) maps any pair $(u,v)\in\ff n\times\ff m$  to the following element in $\Z$,
$$\mathcal W_F(u,v)=\sum_{x\in\ff n}(-1)^{v\cdot F(x)+u\cdot x}.$$
The \emph{Walsh spectrum} of $F$ is then the multi-set of all the values $\mathcal W_F(u,v)$, for $u\in\ff n$ and $v\in\ff m$.
The \emph{nonlinearity} of $F$, defined as the minimum (Hamming) distance between its (nonzero) components $v\cdot F$ and affine functions, can be proven to be equal to
$$nl(F)= 2^{n-1}-\frac{1}{2}\max_{\substack{u\in\ff n\\v\in\ff m\setminus\{0\}}}|\mathcal W_F(u,v)|.$$
An $(n,m)$-function $F$ is \emph{plateaued} if for any $v\in\ff m\setminus\{0\}$ there exists $\lambda_v$ such that $\mathcal W_F(u,v)\in\{0,\pm\lambda_v\}$ for every $u\in\ff n$.

We recall some equivalence relations used for $(n,m)$-functions. Consider two such functions $F$ and $F'$.
\begin{itemize}
    \item If there exist two affine permutations $A_1$ and $A_2$ of $\ff m$ and $\ff n$ respectively, such that $F'(x)=A_1\circ F\circ A_2(x)$, then $F$ and $F'$ are called \emph{affine equivalent}.
    \item If there exist two  $(n,m)$-functions $A$ and $F''$ with $A$ affine and $F''$ affine equivalent to $F$, such that $F'(x)=F''(x)+A(x)$, then $F$ and $F'$ are called \emph{extended affine equivalent}, shortly EA-equivalent.
    \item If there exists an affine permutation $\mathcal{A}$ over $\ff n\times\ff m$ such that $\mathcal A(\Gamma_{F})=\Gamma_{F'}$, where $\Gamma_F=\{(x,F(x)) : x\in\ff n\}$ and $\Gamma_{F'}=\{(x,F'(x)) : x\in\ff n\}$ are the graphs of $F$ and $F'$, then $F$ and $F'$ are called \emph{CCZ-equivalent} \cite{CCZ}. 
\end{itemize}
We also recall  that the affine equivalence is a particular case of the EA-equivalence, which is itself a particular case of the CCZ-equivalence. Moreover, the differential uniformity is kept invariant by these equivalence relations.

\subsection{Dillon's property}
We can now introduce the core definition of this work, Dillon's property.
Related to the study of nonlinearity of  APN functions, Dillon observed the following, as reported  in \cite[pg. 381]{carlet}.
Given $F$ a (normalized) APN $(n,n)$-function with $n>2$, then for any nonzero $c\in\ff n$ the equation 
$$F(x)+F(y)+F(z)+F(x+y+z)=c$$ must have a solution $(x,y,z)\in(\ff n)^3$.
This means that, for $F$ APN, the following two conditions related to the second order derivative are satisfied.\begin{itemize}
    \item For any distinct nonzero $a,a'\in\ff n$, the second order derivative $D_{a,a'}^2F$ takes only nonzero values.
    \item As $(a,a',x)$ varies inside $(\ff n)^*\times(\ff n)^*\times\ff n$, the evaluations $D_{a,a'}^2F(x)$ take every possible nonzero value $c\in\ff m$.
\end{itemize}

Later on, Taniguchi in \cite{taniguchi} generalized the above property introducing the following class of functions.
\begin{defi}
    An $(n,m)$-function $F$ has the Dillon's property (D-property) if 
    $$\left\{F(x)+F(y)+F(z)+F(x+y+z) : x,y,z\in\ff n\right\}=\ff m.$$
    We call such function a D-function.
\end{defi}
Hence, Dillon's observation corresponds to the following result.
\begin{prop}
	\label{Dillon property}
	Let $F$ be an $(n,n)$-function with $n>2$. If $F$ is APN then it is a D-function.
\end{prop}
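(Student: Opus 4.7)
The plan is to argue by contradiction via the Walsh transform. Reformulate first: putting $y = x+a$ and $z = x+a'$ turns the expression $F(x)+F(y)+F(z)+F(x+y+z)$ into the second-order derivative $D^2_{a,a'}F(x) = D_a F(x) + D_a F(x+a')$. Thus the D-property is equivalent to $\{D^2_{a,a'}F(x) : a,a',x\in\ff n\} = \ff n$. The value $0$ is trivially attained, so it suffices to show that every $c\neq 0$ is in the image. Setting $V_a := \mathrm{Im}(D_a F)$, the element $c$ is in the image iff $c\in V_a+V_a$ for some $a\neq 0$; assume for contradiction that $c \notin V_a+V_a$ for every $a\neq 0$.

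The APN hypothesis makes $D_a F$ two-to-one whenever $a\neq 0$, so $|V_a| = 2^{n-1}$. The condition $c\notin V_a+V_a$ rephrases as $V_a\cap(V_a+c)=\emptyset$; a cardinality comparison then forces the partition $V_a\sqcup(V_a+c)=\ff n$ for every $a\neq 0$. Translating to characters, this yields
$$\sum_{t\in\ff n}(-1)^{v\cdot D_a F(t)} = 0 \quad\text{for every } a\neq 0 \text{ and every } v\neq 0 \text{ with } v\cdot c = 0.$$
Combined with the standard identity $\sum_u (-1)^{u\cdot a}\,\mathcal W_F(u,v)^2 = 2^n \sum_t (-1)^{v\cdot D_a F(t)}$ and Parseval ($\sum_u \mathcal W_F(u,v)^2 = 2^{2n}$), Fourier inversion in $u$ gives $\mathcal W_F(u,v)^2 = 2^n$ for every $u$, so $v\cdot F$ is bent whenever $v\neq 0$ and $v\cdot c = 0$.

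Split on the parity of $n$ to extract the contradiction. If $n$ is odd, $|\mathcal W_F(u,v)| = 2^{n/2}$ is not an integer, already impossible. If $n$ is even, pass to the quotient $\bar F : \ff n \to \ff n/\langle c\rangle \cong \ff{n-1}$, whose nonzero components are exactly the $v\cdot F$ for $v\neq 0$ with $v\cdot c = 0$, all bent by the previous step; thus $\bar F$ is a bent $(n,n-1)$-function. By Nyberg's bound for bent $(n,m)$-functions, $m\leq n/2$, so $n-1\leq n/2$, i.e.\ $n\leq 2$, contradicting $n>2$. Either way we reach a contradiction, establishing the D-property.

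The main obstacle I foresee is the Fourier step: promoting the combinatorial fact ``$V_a\sqcup(V_a+c)=\ff n$ for every $a\neq 0$'' to ``$v\cdot F$ is bent for every $v\neq 0$ with $v\cdot c = 0$'' requires combining the $\mathcal W_F^2\!\leftrightarrow\! D_a F$ Fourier duality with Parseval. The combinatorial reduction from APN to the partition of $\ff n$ by $V_a$ is immediate, and once bentness is in place the final contradiction comes from Nyberg's classical bound on bent $(n,m)$-functions.
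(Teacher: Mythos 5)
Your proof is correct. There is nothing in the paper to compare it against directly: Proposition \ref{Dillon property} is stated as Dillon's observation with a pointer to Carlet's book, and no proof is given in the text, so you have supplied a self-contained argument where the paper supplies none. Your route is nonetheless nicely aligned with the paper's later machinery: the reduction of the D-property to the sumsets $V_a+V_a$ with $V_a=\mathrm{Im}(D_aF)$ is exactly the content of Theorem \ref{Theor Dillon from DDT} (the set $\{b_1+b_2 : \texttt{DDT}_F(\alpha,b_1)\ne0,\ \texttt{DDT}_F(\alpha,b_2)\ne0\}$ is precisely $V_\alpha+V_\alpha$), and what you add is the APN-specific part: two-to-one derivatives give $|V_a|=2^{n-1}$, so a missed value $c$ forces the coset partition $V_a\sqcup(V_a+c)=\ff n$ for every $a\ne0$; summing characters over this partition kills $\sum_t(-1)^{v\cdot D_aF(t)}$ for all $v\in c^{\perp}\setminus\{0\}$, and the identity $\sum_u(-1)^{u\cdot a}\mathcal W_F^2(u,v)=2^n\sum_t(-1)^{v\cdot D_aF(t)}$ together with Fourier inversion yields $\mathcal W_F^2(u,v)=2^n$ everywhere (Parseval is in fact redundant here, since $g(a)=\sum_t(-1)^{v\cdot D_aF(t)}$ is already fully determined: $2^n$ at $a=0$ and $0$ elsewhere). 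The endgame is sound in both parities: integrality of the Walsh values rules out $n$ odd, and for $n$ even the $2^{n-1}-1$ bent components indexed by $c^{\perp}\setminus\{0\}$ assemble into a bent $(n,n-1)$-function, which violates Nyberg's bound $m\le n/2$ for $n>2$ --- the same bound the paper itself invokes in the proof of Proposition \ref{prop4}. The hypothesis $n>2$ enters exactly where it must.
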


In \cite{taniguchi} the author studied the notion of D-function for APN $(n,m)$-functions with $n<m$, particularly to the case $m=n+1$ trying to characterize for which $(n,n+1)$-functions the D-property holds.
Indeed,  unlike for the case $n=m$, in general it is not true that if an $(n,m)$-function $F$ is APN then it  satisfies the D-property.
\begin{remark}
    This is easily verified by considering the Gold APN $(7,7)$-function $F(X)=X^3$, which clearly satisfies the D-property.
    When restricting $F$ to the hyperplane $T_0$ of trace zero elements, we obtain a $(6,7)$-function $\left.F\right|_{T_0}$. This function  is still an APN function, since it is the restriction of an APN map, but it does not satisfy the D-property.
    Indeed via a computer search we can verify that $e_1\not\in\{D_{a,a'}^2F(\omega) : a,a'\in T_0^*, \omega\in T_0\}$, where $e_1=(1,0,\ldots,0)$.
\end{remark}

Taniguchi in \cite{taniguchi} presented the following results on D-functions.
\begin{prop}
	\label{Prop1 Taniguchi}
	Let $F$ be a normalized $(n,m)$-function.
	Then $F$ has the D-property if and only if $\{D^2_{a_1,b}F(0)+D^2_{a_2,b}F(0) : a_1,a_2,b\in\mathbb{F}_2^n\}=\ff m$.
 
	If furthermore $F$ is quadratic, then it has the D-property if and only if $\{D^2_{a,b}F(0) : a,b\in\mathbb{F}_2^n\}=\mathbb{F}_2^m$.
\end{prop}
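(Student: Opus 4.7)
The strategy is to translate everything into expressions involving $F$ at arbitrary arguments and compare the resulting sets directly.

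First, since $F$ is normalized, for any $a,b\in\ff n$ the definition of second-order derivative yields
\[
D^2_{a,b}F(0)=F(a+b)+F(a)+F(b)+F(0)=F(a+b)+F(a)+F(b),
\]
so that
\[
D^2_{a_1,b}F(0)+D^2_{a_2,b}F(0)=F(a_1+b)+F(a_1)+F(a_2+b)+F(a_2).
\]
The first equivalence is then just a change of variables: setting $x=a_1$, $y=a_1+b$, $z=a_2$ gives $x+y+z=a_2+b$ and turns the right-hand side into $F(x)+F(y)+F(z)+F(x+y+z)$. The map $(a_1,a_2,b)\mapsto(x,y,z)$ is clearly a bijection of $(\ff n)^3$ (with inverse $a_1=x$, $b=x+y$, $a_2=z$), so the two sets coincide, proving one direction is equivalent to the other.

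For the quadratic case, I would first establish that $D^2_{a,b}F(0)$ is biadditive in $(a,b)$. Since $\deg F\le 2$ and $F$ is normalized, every first derivative $D_aF$ is affine, and moreover $D_aF(x)=L_a(x)+F(a)$ for some $\f2$-linear map $L_a$. Combining this with the identity $D^2_{a,b}F(0)=D_aF(b)+D_aF(0)$ from \eqref{eq:der2} gives $D^2_{a,b}F(0)=L_a(b)$, which is linear in $b$; by the symmetry $D^2_{a,b}F=D^2_{b,a}F$ it is also linear in $a$. Consequently,
\[
D^2_{a_1,b}F(0)+D^2_{a_2,b}F(0)=D^2_{a_1+a_2,b}F(0),
\]
so the set appearing in the first part coincides with $\{D^2_{a,b}F(0):a,b\in\ff n\}$ (take $a=a_1+a_2$ for one inclusion; take $a_1=a$, $a_2=0$ for the other, noting $D^2_{0,b}F(0)=0$). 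The quadratic statement then follows from the first part.

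The proof has no real obstacle: the only slightly non-routine point is spotting the correct change of variables in the first part, and checking the biadditivity of $D^2_{\cdot,\cdot}F(0)$ in the quadratic case, both of which are elementary.
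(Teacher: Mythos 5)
Your proof is correct. Note that the paper itself states this proposition without proof, citing it from Taniguchi's work, so there is no in-paper argument to compare against; your argument is self-contained and sound. The change of variables $(x,y,z)=(a_1,a_1+b,a_2)$ does give a bijection of $(\ff n)^3$ identifying the two sets in the first part, and in the quadratic case the bilinearity of $(a,b)\mapsto D^2_{a,b}F(0)$ (linear in $b$ since $D_aF$ is affine with constant term $D_aF(0)=F(a)$, and linear in $a$ by symmetry) correctly collapses $D^2_{a_1,b}F(0)+D^2_{a_2,b}F(0)$ to $D^2_{a_1+a_2,b}F(0)$, with the reverse inclusion handled by $a_2=0$.
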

\begin{prop}
	\label{Prop3 Taniguchi}
	If $F$ is a normalized $(n,m)$ D-function, then $nl(F)>0$.
\end{prop}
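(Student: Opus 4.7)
The plan is to prove the contrapositive: if $nl(F)=0$, then $F$ fails the D-property. By the Walsh characterization of nonlinearity recalled in Section~\ref{sec:2}, the equality $nl(F)=0$ is equivalent to $\max_{u\in\ff n,\,v\in\ff m\setminus\{0\}}|\mathcal W_F(u,v)|=2^n$, which in turn is equivalent to the existence of a nonzero $v\in\ff m$ such that the component $v\cdot F$ coincides with an affine function, say $v\cdot F(x)=u\cdot x+\epsilon$ for some $u\in\ff n$ and $\epsilon\in\f2$. So I would start by reducing to this situation.

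Next, I would exploit the fact that the quantity $F(x)+F(y)+F(z)+F(x+y+z)$ is, up to reparameterization, a second-order derivative (see \eqref{eq:der2}), and that second-order derivatives annihilate affine maps. Concretely, for such $v$ and for every $x,y,z\in\ff n$,
\begin{equation*}
v\cdot\bigl(F(x)+F(y)+F(z)+F(x+y+z)\bigr)=u\cdot(x+y+z+x+y+z)+4\epsilon=0
\end{equation*}
in $\f2$. Hence the whole image set $\{F(x)+F(y)+F(z)+F(x+y+z):x,y,z\in\ff n\}$ lies in the hyperplane $H=\{w\in\ff m:v\cdot w=0\}$, which is a proper subspace of $\ff m$ since $v\ne 0$. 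This contradicts the D-property and completes the argument.

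There is no substantial obstacle: the only point to check is that having a nonzero Walsh coefficient of magnitude $2^n$ really does correspond to an affine (not merely linear) component, and that the constant term $\epsilon$ is killed because it appears four times modulo $2$. The normalization hypothesis $F(0)=0$ is not needed in the argument, though of course it forces $\epsilon=0$ a posteriori.
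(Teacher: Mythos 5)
Your proof is correct, and it is the standard argument for this fact (the paper itself does not reprove it but cites Taniguchi, whose proof follows the same line): $nl(F)=0$ forces an affine component $v\cdot F$, whose vanishing on all sums $F(x)+F(y)+F(z)+F(x+y+z)$ traps the image set in the proper hyperplane $\{w : v\cdot w=0\}$. Your side remarks — that a Walsh value of magnitude $2^n$ corresponds to an affine (possibly non-linear) component, and that normalization is not actually needed — are both accurate.
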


Next in \cite{taniguchi} the author built several families of quadratic APN functions $F:\ff n\rightarrow\ff{n+1}$ satisfying the D-property.
We adopt the notation used by the author, describing the functions in their univariate representation.
Set $T_0(\FF k)=\{\omega\in\FF k : Tr_k(\omega)=0\}$.
\begin{theorem}
	Let $n+1=2k$ with $k\geqslant 3$. 
 Let $F:\mathbb{F}_{2^{n+1}}\rightarrow\mathbb{F}_{2^{n+1}}$ be a normalized quadratic APN function with univariate representation given by
	\begin{equation*}
		F(X)=\sum_{i<j}a_{i,j}X^{2^i+2^j}+\sum_{\ell}b_\ell X^{2^\ell}
	\end{equation*}
	 with  all the coefficients $a_{i,j},b_\ell$ belonging to $\FF k$ and satisfying $i+j\equiv 1\bmod2$ for every $i,j$ with $a_{i,j}\ne 0$.  
	 Then $\left.F\right|_{T_0(\mathbb{F}_{2^{n+1}})}:T_0(\mathbb{F}_{2^{n+1}})\rightarrow\mathbb{F}_{2^{n+1}}$ satisfies the D-property.
\end{theorem}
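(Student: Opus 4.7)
The plan is based on Proposition~\ref{Prop1 Taniguchi}: since $F$ is quadratic, so is $F|_{T_0(\FF{n+1})}$, and the D-property reduces to showing that the symmetric $\f2$-bilinear form $B(a,b):=D^2_{a,b}F(0)$ satisfies $\{B(a,b):a,b\in T_0(\FF{n+1})\}=\FF{n+1}$. Under the hypotheses $B(a,b)=\sum_{i<j,\,i+j\text{ odd}}a_{i,j}(a^{2^i}b^{2^j}+a^{2^j}b^{2^i})$ with $a_{i,j}\in\FF{k}$, and the problem becomes one about the image of this bilinear form on $T_0\times T_0$.

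The key observation I would use is that, since all $a_{i,j}\in\FF{k}$, $F$ commutes with the Frobenius $\sigma:x\mapsto x^{2^k}$ of $\FF{n+1}/\FF{k}$ and therefore stabilizes the subfield $\FF{k}=\mathrm{Fix}(\sigma)$. The restriction $F|_{\FF{k}}:\FF{k}\to\FF{k}$ is a quadratic APN function (restriction of an APN map to a subspace preserves APN in characteristic~2); since $k\geq 3$, Proposition~\ref{Dillon property} applied to it gives $\{B(\alpha,\beta):\alpha,\beta\in\FF{k}\}=\FF{k}$. Because $n+1=2k$ is even one has $Tr_{n+1}(\FF{k})=\{0\}$, so $\FF{k}\subseteq T_0(\FF{n+1})$, and the image of $B|_{T_0\times T_0}$ already contains $\FF{k}$.

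To cover the remaining elements of $\FF{n+1}\setminus\FF{k}$, I would fix $\omega\in T_0(\FF{n+1})\setminus\FF{k}$ (which exists since $\dim T_0=2k-1>k$) so that $\{1,\omega\}$ is an $\FF{k}$-basis of $\FF{n+1}$ and $T_0(\FF{n+1})=\FF{k}\oplus\omega H$ for a suitable hyperplane $H\subset\FF{k}$. Using $\omega^{2^i}=\omega+\nu_i$ with $\nu_i\in\FF{k}$, an expansion gives for $\alpha\in H$ and $\beta,\gamma\in\FF{k}$
\[
B(\omega\alpha+\gamma,\beta)=\omega\,B(\alpha,\beta)+\bigl(R(\alpha,\beta)+B(\gamma,\beta)\bigr),
\]
where $R(\alpha,\beta)=\sum_{i<j,\,i+j\text{ odd}}a_{i,j}(\nu_i\alpha^{2^i}\beta^{2^j}+\nu_j\alpha^{2^j}\beta^{2^i})\in\FF{k}$. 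Thus the $\omega$-coordinate of $B(\omega\alpha+\gamma,\beta)$ equals the ``lower-field'' form $B(\alpha,\beta)$, while the free parameter $\gamma$ can shift the $\FF{k}$-coordinate through the $\f2$-linear map $\gamma\mapsto B(\gamma,\beta)$.

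The main obstacle is the two-coordinate realization argument that follows: to reach every $c=c_0+\omega c_1$ with $c_1\ne 0$, one has first to find $(\alpha,\beta)\in H\times\FF{k}$ with $B(\alpha,\beta)=c_1$, and then $\gamma\in\FF{k}$ with $B(\gamma,\beta)=c_0+R(\alpha,\beta)$. The first step requires showing that restricting $\alpha$ to the codimension-one hyperplane $H$ does not shrink the surjection on $\FF{k}\times\FF{k}$ supplied by the D-property of $F|_{\FF{k}}$; I expect this from a union-of-hyperplanes argument on the $\f2$-linear maps $B(\cdot,\beta)$. The second step is an affine-reachability question: since $B(\cdot,\beta)$ surjects only onto a hyperplane of $\FF{k}$, one must exploit the multiplicity of admissible $(\alpha,\beta)$ with $B(\alpha,\beta)=c_1$ — together with the explicit parity-constrained form of $R$ forced by $i+j$ odd — to align the target $c_0+R(\alpha,\beta)$ with that hyperplane for some admissible choice; if needed, one enlarges the parameter space by also allowing $\beta$ to have a nontrivial $\omega$-component.
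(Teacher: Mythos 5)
This theorem is quoted by the paper from Taniguchi's work \cite{taniguchi} and is stated there without proof, so there is no internal proof to compare against; I therefore assess your argument on its own merits. The first half of your plan is sound and attractive: since all coefficients lie in $\FF k$, $F$ stabilizes $\FF k$; the restriction $F|_{\FF k}$ is a quadratic APN $(k,k)$-function (at most two solutions in the big field implies at most two in the subfield); Proposition \ref{Dillon property} with $k\geqslant 3$ then gives $\{B(\alpha,\beta):\alpha,\beta\in\FF k\}=\FF k$; and since $n+1=2k$ forces $Tr^{n+1}_k$ to vanish on $\FF k$, indeed $\FF k\subseteq T_0(\FF{n+1})$, so the image of $B$ on $T_0\times T_0$ contains $\FF k$. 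Your first ``obstacle'' (restricting one of the two directions to a hyperplane) is in fact already settled by Theorem \ref{simplification checking D property for quadratic} of this paper applied to $F|_{\FF k}$, so that step needs no new work.

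The genuine gap is the second step, which is where essentially all of the content of the theorem lies: you must reach the $2^{2k}-2^k$ targets $c_0+\omega c_1$ with $c_1\ne 0$, and for these you only record that $\gamma\mapsto B(\gamma,\beta)$ surjects onto a hyperplane $W_\beta$ of $\FF k$ and that one must ``align'' $c_0+R(\alpha,\beta)$ with $W_\beta$ for some admissible $(\alpha,\beta)$. No argument is given for why such an alignment is always possible; a union of cosets of hyperplanes indexed by the solution set of $B(\alpha,\beta)=c_1$ need not cover $\FF k$ without further input. Crucially, the hypothesis $i+j\equiv 1\bmod 2$ --- the distinguishing assumption of the statement, without which one should not expect the conclusion (compare the Remark following Proposition \ref{Prop1 Taniguchi}: an APN restriction to a hyperplane can fail the D-property) --- is never actually used in your computation; it appears only as something to be ``exploited'' later. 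Any complete proof must deploy it, e.g.\ in controlling $R(\alpha,\beta)$ or the cross terms that arise once $\beta$ is also allowed an $\omega$-component. A smaller repair: the identity $\omega^{2^i}=\omega+\nu_i$ with $\nu_i\in\FF k$ holds only if $\omega$ is chosen with $\omega^{2^k}+\omega=1$; for a generic $\omega\in T_0(\FF{n+1})\setminus\FF k$ one gets $\omega^{2^i}=c^{2^i-1}\omega+\nu_i$ with $c=\omega+\omega^{2^k}$, which changes your expansion (and note that such a normalized $\omega$ lies in $T_0(\FF{n+1})$ only when $k$ is even, though the decomposition $T_0(\FF{n+1})=\FF k\oplus\omega T_0(\FF k)$ survives). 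As it stands, the proposal establishes only that $\FF k$ is contained in the image, not the full D-property.
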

As a corollary, the author showed that the Gold functions $X^{2^i+1}$ restricted to the subspace of trace-zero element of $\mathbb{F}_2^{n+1}$ satisfies the D-property for every $i$ odd and for every dimension $n+1$ even ($n+1\ge6$). 

For the case $n+1$ odd, the following result was obtained in \cite{taniguchi}.
\begin{theorem}
	\label{Theor n+1 odd taniguchi}
	Let $F:\mathbb{F}_{2^{n+1}}\rightarrow\mathbb{F}_{2^{n+1}}$ be a quadratic normalized APN function such that all the coefficients of the univariate representation of $F$ belong to a subfield $\FF k$ of $\FF{n+1}$, with $\FF k\ne \f2,\mathbb{F}_4$.
	Then if $\left.F\right|_{T_0(\FF k)}:T_0(\FF k)\rightarrow\FF k$ satisfies the D-property, also $\left.F\right|_{T_0(\mathbb{F}_{2^{n+1}})}:T_0(\mathbb{F}_{2^{n+1}})\rightarrow\mathbb{F}_{2^{n+1}}$ satisfies the D-property.
\end{theorem}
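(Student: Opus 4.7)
The plan is to reduce the D-property of $F|_{T_0(\FF{n+1})}$ to a statement about the symmetric $\FF 2$-bilinear polar form $B(x,y)=F(x+y)+F(x)+F(y)$, and then exploit the Galois-equivariance coming from the coefficients of $F$ lying in $\FF k$.

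First I would invoke Proposition \ref{Prop1 Taniguchi}: since $F$ is quadratic, so are both restrictions, and the D-property for $F|_{T_0(\FF{n+1})}$ is equivalent to $B$ surjecting onto $\FF{n+1}$ when restricted to $T_0(\FF{n+1})\times T_0(\FF{n+1})$, while the hypothesis translates to $B$ surjecting onto $\FF k$ on $T_0(\FF k)\times T_0(\FF k)$. Using the tower formula $Tr_{n+1}=Tr_k\circ Tr^{n+1}_k$ together with $Tr^{n+1}_k|_{\FF k}=d\cdot\mathrm{id}$ for $d=(n+1)/k$, one checks $T_0(\FF k)\subseteq T_0(\FF{n+1})$, so already $\FF k\subseteq I$, where $I:=B(T_0(\FF{n+1})\times T_0(\FF{n+1}))$ is the image that we want to show equals $\FF{n+1}$.

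The key structural input is the following: since $a_{i,j},b_\ell\in\FF k$, the map $F$ commutes with the $\FF k$-Frobenius $\sigma(x)=x^{2^k}$, and consequently $B(\sigma a,\sigma b)=\sigma B(a,b)$. Because $\sigma$ also preserves $T_0(\FF{n+1})$, the set $I$ is $\mathrm{Gal}(\FF{n+1}/\FF k)=\langle\sigma\rangle$-stable. I would then introduce a parametric family of pairs in $T_0(\FF{n+1})^2$ of the form $(\alpha a_0,b_0)$, with $a_0,b_0\in T_0(\FF k)$ supplied by the hypothesis and $\alpha\in\FF{n+1}$ chosen so that $Tr_{n+1}(\alpha a_0)=0$. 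Expanding
$$B(\alpha a_0,b_0)=\sum_{r,s}A_{r,s}\,\alpha^{2^r}a_0^{2^r}b_0^{2^s},$$
one sees the output lies in the $\FF k$-span of the Frobenius conjugates $\{\alpha^{2^r}\}$ with coefficients controlled by $a_0,b_0\in\FF k$. The exclusion $\FF k\neq\FF 2,\FF 4$ is used precisely here: it ensures that $T_0(\FF k)$ is large enough to supply such $a_0,b_0$ non-trivially and to make the trace-zero constraint on $\alpha a_0$ cut out enough $\alpha$'s in $\FF{n+1}$.

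The core technical step, which I expect to be the main obstacle, is showing that this parametric family — together with its $\sigma$-orbit and with the subset $\FF k\subseteq I$ — actually exhausts $\FF{n+1}$. Since the image of a bilinear form is not closed under addition, one cannot simply glue basis contributions together additively; the parameters $(a_0,b_0,\alpha)$ must be chosen jointly so that the resulting $B$-evaluations meet every $\sigma$-orbit in $\FF{n+1}\setminus\FF k$. Once every such orbit is hit and $\FF k\subseteq I$, the $\sigma$-invariance of $I$ forces $I=\FF{n+1}$, concluding the proof.
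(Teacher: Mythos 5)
First, a contextual note: the paper you are working from does not prove this statement --- it is quoted verbatim from Taniguchi \cite{taniguchi} in the preliminaries, so there is no in-paper argument to measure your proposal against. Your setup is sound as far as it goes: by Proposition \ref{Prop1 Taniguchi} the D-property of the restriction is indeed equivalent to the surjectivity of the polar form $B(x,y)=F(x)+F(y)+F(x+y)$ on $T_0(\mathbb{F}_{2^{n+1}})\times T_0(\mathbb{F}_{2^{n+1}})$; the containment $T_0(\FF k)\subseteq T_0(\mathbb{F}_{2^{n+1}})$ and hence $\FF k\subseteq I$ is correct; and since the coefficients lie in $\FF k$, the function $F$ does commute with $\sigma(x)=x^{2^k}$, which preserves $T_0(\mathbb{F}_{2^{n+1}})$, so $I$ is $\langle\sigma\rangle$-stable.

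However, the proposal has a genuine gap, and it sits exactly where you say you expect "the main obstacle": nothing in what you have written forces $I$ to be anything larger than $\FF k$. A $\sigma$-stable subset of $\mathbb{F}_{2^{n+1}}$ containing $\FF k$ can a priori miss every $\sigma$-orbit outside $\FF k$, so the Galois-equivariance only reduces the problem from "hit every element of $\mathbb{F}_{2^{n+1}}\setminus\FF k$" to "hit every orbit", a saving of a factor of at most $(n+1)/k$ --- it does not produce a single new element of $I$. The parametric family $(\alpha a_0,b_0)$ is plausible, but for fixed $a_0,b_0$ the map $\alpha\mapsto B(\alpha a_0,b_0)$ is $\f2$-linear in $\alpha$, so its image on the hyperplane $\{\alpha: Tr_{n+1}(\alpha a_0)=0\}$ is a subspace of dimension at most $n$, possibly much smaller (its kernel is not controlled by anything you have established); and, as you yourself note, images for different $(a_0,b_0)$ cannot be added. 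So the entire content of the theorem --- producing, for each $c\in\mathbb{F}_{2^{n+1}}\setminus\FF k$, an explicit pair in $T_0(\mathbb{F}_{2^{n+1}})^2$ mapping to $c$ (or to some element of its orbit) --- remains to be supplied, and the hypothesis that $B$ surjects onto $\FF k$ from $T_0(\FF k)^2$ has not yet been used in any essential way beyond giving $\FF k\subseteq I$. As it stands the proposal is a correct reduction plus a restatement of the goal, not a proof.
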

This result was used to generate families of D-functions. In particular, in \cite{taniguchi} the author tested computationally the D-property for the Gold functions $X^{2^i+1}$ on $\mathbb{F}_{2^{n+1}}$ when restricting them to  $T_0(\mathbb{F}_{2^{n+1}})$ for $n+1=9,11,13,15$.
Thanks to Theorem \ref{Theor n+1 odd taniguchi}  the D-property could be extended to  the Gold functions in all the multiple dimensions, i.e.\ $n+1 = 9s,11s,13s,15s$ for every positive integer $s$. The same result was obtained for the APN function $X^3+Tr_{n+1}(X^9)$ restricted to  $T_0(\mathbb{F}_{2^{n+1}})$.

\section{Further characterizations of Dillon's property}
In this section we present our results on Dillon's property without assuming any particular structure on the function studied. In later sections we  focus more on the case of quadratic functions.\\
In what follows $F$ and $F'$ will always be an $(n,m)$-functions.

In \cite{taniguchi}, the author showed that, when restricted to quadratic functions, the D-property is invariant under the CCZ-equivalence.
We further investigate the invariance of the D-property for generic $(n,m)$-functions.

First, we assume that  there exist two affine permutations $A_1$ and $A_2$ of $\ff m$ and $\ff n$ respectively, such that $F'(x)=A_1\circ F \circ A_2(x)$.
We can write the first affine permutation as $A_1(x)=L_1(x)+c_1$, with $L_1$ a linear permutation of $\ff m$ and $c_1\in\ff m$.
We then want to study the equation $b=F'(x)+F'(y)+F'(z)+F'(x+y+z)$.
The right-hand-side coincides with the following
$$\begin{aligned}
F'(x)+F'(y)+F'(z)+&F'(x+y+z)=A_1(F(A_2(x)))+A_1(F(A_2(y)))\\&+A_1(F(A_2(z)))+A_1(F(A_2(x+y+z)))\\
=&L_1(F(A_2(x)))+L_1(F(A_2(y)))\\&+L_1(F(A_2(z)))+L_1(F(A_2(x+y+z))).
\end{aligned}$$

Given that $A_2(x+y+z)=A_2(x)+A_2(y)+A_2(z)$, by replacing $x'=A_2(x)$, $y'=A_2(y)$ and $z'=A_2(z)$ we obtain $$L_1^{-1}(b)=F('x)+F(y')+F(z')+F(x'+y'+z').$$
From the above computation, it is straightforward to deduce that the D-property is invariant under the affine transformation.
Now we set $F'(x)=F(x)+A(x)$, with $A$ an affine $(n,m)$-function.
Then the equation we want to study is of the  form
$b=F'(x)+F'(y)+F'(z)+F'(x+y+z)=F(x)+A(x)+F(y)+A(y)+F(z)+A(z)+F(x+y+z)+A(x+y+z)=F(x)+F(y)+F(z)+F(x+y+z)$.
Therefore, the D-property is invariant also when adding an affine transformation.

The two obtained results lead to the invariance with respect to the extended affine equivalence.
So, we can state the following.
\begin{prop}
    \label{prop: EA invariance}
    The D-property is invariant under EA-equivalence.
\end{prop}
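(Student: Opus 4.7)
The plan is to package the two calculations already laid out in the paragraphs immediately preceding the statement into a single invariance claim. Those paragraphs have shown two things separately: (i) if $F'(x) = A_1 \circ F \circ A_2(x)$ with $A_1,A_2$ affine permutations, then writing $A_1(y) = L_1(y) + c_1$ the four additive copies of $c_1$ in $F'(x)+F'(y)+F'(z)+F'(x+y+z)$ cancel over $\f2$, so by linearity of $L_1$ the sum equals $L_1\bigl(F(x')+F(y')+F(z')+F(x'+y'+z')\bigr)$ after the bijective substitution $x' = A_2(x)$, $y' = A_2(y)$, $z' = A_2(z)$; (ii) if $F'(x) = F(x) + A(x)$ with $A$ affine, then $F'(x)+F'(y)+F'(z)+F'(x+y+z) = F(x)+F(y)+F(z)+F(x+y+z)$, because the affine contribution cancels pointwise.

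Recalling the definition of EA-equivalence, $F'(x) = F''(x) + A(x)$ with $A$ affine and $F''$ affine equivalent to $F$. Applying (ii) to the pair $(F',F'')$ yields that $F'$ is a D-function iff $F''$ is, since the two give identical four-argument sums. Applying (i) to $(F'',F)$ relates the corresponding image set to that of $F$ via the linear bijection $L_1$, so one equals $\ff{m}$ iff the other does. Chaining the two equivalences is the desired EA-invariance.

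The one step deserving care — and the main potential obstacle — is verifying in (i) that the outer affine (not merely linear) map $A_2$ still yields a clean change of variables in all four arguments simultaneously. This rests on the observation that, for $A_2(t) = L_2(t) + c_2$, one has $A_2(x)+A_2(y)+A_2(z) = L_2(x+y+z) + 3c_2 = A_2(x+y+z)$ in characteristic two, so that $x'+y'+z' = A_2(x+y+z)$ and the fourth argument transforms consistently with the first three. Once this identity is in hand the substitution is legitimate and the proof reduces to routine verification.
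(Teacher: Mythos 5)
Your proposal is correct and follows essentially the same route as the paper: it splits EA-equivalence into the affine-equivalence part (where the constant of $A_1$ cancels four times over $\f2$, the linear part factors out, and the identity $A_2(x)+A_2(y)+A_2(z)=A_2(x+y+z)$ justifies the change of variables) and the additive-affine part (where the affine contribution cancels pointwise), then chains the two. The step you flag as delicate is exactly the observation the paper relies on, so no gap remains.
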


Therefore, in the following sections, when studying the D-property of a function, we can assume without loss of generality that $F(0)=0$.

\subsection{A Combinatorial-Geometric Description}
\label{Sec.3.2:A Combinatorial-Geometric Description}

Like the APN property, also the D-property can be described in terms of the values taken by the second order derivatives of the function.
Consider the set of all 2-dimensional affine subspaces of $\ff{n}$, $$\mathcal{AFF}_{2,n}=\big\{\  \{x,y,z,x+y+z\} : x,y,z\in\ff n \mbox{ s.t. } x\ne y, x\ne z, y\ne z\big\}.$$
Then, to an $(n,m)$-function $F$ we associate the map
\begin{align*}
	\Phi_F:&\mathcal{AFF}_{2,n}\longrightarrow\mathbb{F}_2^m\\
	&\hspace{2pc}A\longmapsto\sum_{x\in A}F(x).
\end{align*}
We see in the following that, as  the APN property, the D-property
 has a natural combinatorial-geometric description in terms of the introduced map.
Indeed, for $A=\{x,y,z,x+y+z\}\in \mathcal{AFF}_{2,n}$ we have $\Phi_F(A)=D_{x+y,x+z}^2F(x)$.
Therefore we have the following.
\begin{prop}
    An  $(n,m)$-function $F$ is APN if and only if  $\Phi_F(A)\ne 0$ for every $A\in\mathcal{AFF}_{2,n}$.
\end{prop}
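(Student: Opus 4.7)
The plan is to reduce the statement to the well-known characterization of the APN property via second-order derivatives and then translate back to the language of $\Phi_F$ and affine planes.

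First, I would unpack the image of $\Phi_F$. Given $A = \{x, y, z, x+y+z\} \in \mathcal{AFF}_{2,n}$, setting $a := x+y$ and $a' := x+z$, the conditions $x \ne y$, $x \ne z$, $y \ne z$ translate exactly to $a \ne 0$, $a' \ne 0$, $a \ne a'$ (so that also $a + a' \ne 0$). Then a direct expansion gives
$$\Phi_F(A) = F(x) + F(x+a) + F(x+a') + F(x+a+a') = D^2_{a,a'}F(x).$$
Conversely, any triple $(x, a, a')$ with $a, a'$ nonzero and distinct produces such a set $A$, and the correspondence is surjective onto $\mathcal{AFF}_{2,n}$. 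Hence $\Phi_F \ne 0$ on $\mathcal{AFF}_{2,n}$ is equivalent to $D^2_{a,a'}F(x) \ne 0$ for every $x \in \ff n$ and every pair of distinct nonzero $a, a' \in \ff n$.

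Second, I would recall the standard rephrasing of the APN property: $F$ is APN iff for every nonzero $a \in \ff n$ and every $b \in \ff m$, the equation $D_a F(x) = b$ has at most two solutions. Equivalently, $D_a F(x) = D_a F(x+a')$ forces $a' \in \{0, a\}$, i.e., $D^2_{a,a'}F(x) = D_a F(x) + D_a F(x+a') \ne 0$ whenever $a' \notin \{0,a\}$. Using the identity $D^2_{a,a'}F(x) = D_a F(x) + D_a F(x+a')$ recalled in \eqref{eq:der2}, this is exactly the condition described in the previous paragraph.

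Combining these two equivalences immediately yields the proposition. The proof is essentially a bookkeeping argument; the only subtle point is to verify carefully that the correspondence between $2$-dimensional affine subspaces of $\ff n$ and triples $(x, a, a')$ with $a, a'$ distinct nonzero elements is indeed well-defined on both sides — i.e., that forbidding the degenerate cases where $A$ has fewer than four elements matches exactly the conditions $a \ne 0$, $a' \ne 0$, $a \ne a'$ (which additionally force $a+a' \ne 0$, so all four points $x, x+a, x+a', x+a+a'$ are distinct). No real obstacle is expected; the statement is essentially a rewriting of Nyberg's characterization of APN in geometric language.
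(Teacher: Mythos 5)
Your proof is correct and follows essentially the same route as the paper, which (without writing out the details) relies on exactly the identity $\Phi_F(A)=D^2_{x+y,\,x+z}F(x)$ together with the standard characterization of APN via nonvanishing second-order derivatives in distinct nonzero directions. Your careful check that the nondegeneracy conditions on $A$ match the conditions $a\ne 0$, $a'\ne 0$, $a\ne a'$ is the only substantive point, and you handle it correctly.
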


\begin{prop}
    \label{prop.combinat.descript.}
	An  $(n,m)$-function $F$ is a D-function if and only if $\mathbb{F}_2^{m}\setminus\{0\}\subseteq\{\Phi_F(A) : A\in\mathcal{AFF}_{2,n}\}$.  
\end{prop}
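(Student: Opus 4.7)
The plan is to argue both inclusions after making explicit the dichotomy between "degenerate" and "non-degenerate" triples $(x,y,z)$. The key observation is that if $x,y,z \in \ff n$ are not pairwise distinct, then two of the four evaluations in $F(x)+F(y)+F(z)+F(x+y+z)$ coincide and the remaining two are equal; hence the sum is $0$. Conversely, if $x,y,z$ are pairwise distinct, then $x+y+z$ is automatically distinct from each of $x,y,z$ (since in characteristic $2$, $x+y+z = x$ forces $y=z$, and similarly for the other cases), so $\{x,y,z,x+y+z\}$ genuinely is a $2$-dimensional affine subspace in $\mathcal{AFF}_{2,n}$.

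For the forward direction, I would assume $F$ is a D-function and take any nonzero $c \in \ff m$. By the D-property there exist $x,y,z$ with $F(x)+F(y)+F(z)+F(x+y+z)=c$. By the observation above, these $x,y,z$ must be pairwise distinct (otherwise the sum would be $0 \neq c$), so $A = \{x,y,z,x+y+z\} \in \mathcal{AFF}_{2,n}$ satisfies $\Phi_F(A)=c$. Hence $\ff m \setminus \{0\} \subseteq \{\Phi_F(A) : A \in \mathcal{AFF}_{2,n}\}$.

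For the reverse direction, assume the inclusion on affine $2$-flats holds. Every nonzero $c \in \ff m$ is realized as $\Phi_F(A)$ for some $A = \{x,y,z,x+y+z\} \in \mathcal{AFF}_{2,n}$, which by definition of $\Phi_F$ means $c = F(x)+F(y)+F(z)+F(x+y+z)$, so $c$ lies in the set appearing in the D-property. To cover $c=0$, one just picks any degenerate triple, for instance $x=y=z=0$, yielding $0$. Therefore $\{F(x)+F(y)+F(z)+F(x+y+z) : x,y,z \in \ff n\} = \ff m$, i.e.\ $F$ is a D-function.

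There is no real obstacle here: the entire content of the proposition is the bookkeeping between triples in $(\ff n)^3$ and the set $\mathcal{AFF}_{2,n}$, together with the remark that $0$ is always trivially attained while the nonzero values require a non-degenerate triple, exactly what is captured by an element of $\mathcal{AFF}_{2,n}$.
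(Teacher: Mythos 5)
Your proof is correct and is exactly the bookkeeping the paper leaves implicit: the proposition is stated without proof, as an immediate consequence of the identification $\Phi_F(A)=D^2_{x+y,x+z}F(x)$ and of the definition of $\mathcal{AFF}_{2,n}$ as the quadruples $\{x,y,z,x+y+z\}$ with $x,y,z$ pairwise distinct. Your dichotomy (degenerate triples always give $0$; non-degenerate triples correspond precisely to elements of $\mathcal{AFF}_{2,n}$) is the right, and essentially the only, argument.
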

This last description allows to derive necessary conditions on the dimensions $n$ and $m$ when there exists an APN $(n,m)$-function satisfying the D-property.
\begin{prop}
	\label{prop4}
	Let $n>2$ and let  $F$ be an APN $(n,m)$-function satisfying the D-property. Then the following bounds must hold
	\begin{equation*}
		n\leqslant m\leqslant\log_2\left(\frac{2^n}{12}(2^n-1)(2^{n-1}-1)+1\right)<3n-4.
	\end{equation*}
\end{prop}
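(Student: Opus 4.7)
The lower bound $n\le m$ is immediate: the paper defines APN only when $n\le m$, so it is already part of the hypothesis on $F$. The real content is the upper bound, which I would attack purely combinatorially via Proposition~\ref{prop.combinat.descript.}.

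By Proposition~\ref{prop.combinat.descript.}, the D-property is equivalent to
$\ff m\setminus\{0\}\subseteq\Phi_F(\mathcal{AFF}_{2,n})$; in particular $|\mathcal{AFF}_{2,n}|\ge 2^m-1$, so the task reduces to counting $|\mathcal{AFF}_{2,n}|$ exactly. Every 2-dimensional affine subspace of $\ff n$ is uniquely a coset of a 2-dimensional linear subspace, and the number of 2-dimensional linear subspaces of $\ff n$ is the Gaussian binomial
$$\binom{n}{2}_2=\frac{(2^n-1)(2^{n-1}-1)}{(2^2-1)(2-1)}=\frac{(2^n-1)(2^{n-1}-1)}{3};$$
each such subspace has $2^{n-2}$ cosets in $\ff n$, so
$$|\mathcal{AFF}_{2,n}|=\frac{2^{n-2}(2^n-1)(2^{n-1}-1)}{3}=\frac{2^n(2^n-1)(2^{n-1}-1)}{12}.$$
Substituting this into $2^m\le|\mathcal{AFF}_{2,n}|+1$ immediately yields the middle inequality of the statement.

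For the rightmost inequality $\log_2(|\mathcal{AFF}_{2,n}|+1)<3n-4$, I would use the crude estimate $(2^n-1)(2^{n-1}-1)<2^n\cdot 2^{n-1}=2^{2n-1}$ to conclude
$$|\mathcal{AFF}_{2,n}|<\frac{2^n\cdot 2^{2n-1}}{12}=\frac{2^{3n-1}}{12},$$
and then check that $\frac{2^{3n-1}}{12}+1\le\frac{2^{3n-1}}{8}=2^{3n-4}$, which holds as soon as $\frac{2^{3n-1}}{24}\ge 1$, i.e.\ for $n\ge 2$. There is no serious technical obstacle here: the argument is a single counting identity (Gaussian binomial $\times$ number of cosets) followed by a routine estimate. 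The only delicate point to check is the strictness of the final inequality, but this has room to spare thanks to the multiplicative gap $3/2$ between the constants $12$ and $8$.
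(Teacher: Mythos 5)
Your treatment of the upper bound is correct and is essentially the paper's own argument: both count $|\mathcal{AFF}_{2,n}|=2^{n-2}\binom{n}{2}_2=\frac{2^n}{12}(2^n-1)(2^{n-1}-1)$ and feed $2^m-1\le|\mathcal{AFF}_{2,n}|$ into Proposition~\ref{prop.combinat.descript.}. Your closing estimate (bounding $(2^n-1)(2^{n-1}-1)<2^{2n-1}$ and then checking $\frac{2^{3n-1}}{12}+1\le 2^{3n-4}$) is a clean, valid alternative to the paper's route, which instead uses a parity observation to drop the $+1$ and then takes logarithms term by term; the two are interchangeable and yours even works already for $n\ge 2$.

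The one place you cut a corner is the left-hand inequality $n\le m$. You dismiss it as being built into the definition of APN, but the proposition lists $n\le m$ among its conclusions, and the paper actually proves it: for an APN function Nyberg's bound $\delta_F\ge 2^{n-m}$ only yields $m\ge n-1$, and the case $m=n-1$ must then be excluded separately, since a $2$-uniform $(n,n-1)$-function would attain Nyberg's bound (i.e.\ be perfect nonlinear), which forces $n-1\le n/2$ and hence $n\le 2$. This is precisely where the hypothesis $n>2$ enters for that part of the statement; reading $n\le m$ as a hypothesis makes a third of the proposition vacuous rather than proved. Everything else in your argument stands.
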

\begin{proof}
The first inequality follows from Nyberg's bound since $F$ is APN. 
Indeed, we need $m\ge n-1$ and $m=n-1$ is impossible except if $n\le2$. This is due to the fact that  2-uniform $(n,n-1)$-functions would attain Nyberg's bound and, from \cite{Nyberg}, this would imply $n-1\le n/2$.

For the second bound we use Proposition \ref{prop.combinat.descript.} to obtain that if $F$ is a D-function then it must be
\begin{equation*}
	|\mathbb{F}_2^m\setminus\{0\}|=2^m-1\leqslant|\mathcal{AFF}_{2,n}|.
\end{equation*}
It is known that $|\mathcal{AFF}_{2,n}|=2^{n-2}\binom{n}{2}_2$ where 
\begin{equation*}
	\binom{n}{k}_q=\frac{(1-q^n)(1-q^{n-1})\cdots(1-q^{n-k+1})}{(1-q)(1-q^2)\cdots(1-q^k)}
\end{equation*}
is the Gaussian binomial coefficient which counts the number of vector subspaces of dimension $k$ inside $\mathbb{F}_q^n$.\\
In our case we get
\begin{equation*}
	|\mathcal{AFF}_{2,n}|=2^{n-2}\frac{(1-2^n)(1-2^{n-1})}{(1-2)(1-2^2)}=\frac{1}{12}2^n(2^n-1)(2^{n-1}-1).
\end{equation*}
Hence we can conclude that
\begin{equation*}
	2^m\leqslant\frac{2^n}{12}(2^n-1)(2^{n-1}-1)+1,
\end{equation*}
which leads to the upper bound on $m$ by taking logarithms both sides.\\
We can make this more explicit; since $n>2$ the bound above is surely odd, hence we can drop the $+1$ and get
\begin{equation*}
	2^m\leqslant\frac{2^n}{12}(2^n-1)(2^{n-1}-1).
\end{equation*}
By taking logarithms on both sides, we obtain the last bound
\begin{equation*}
	m\leqslant n+\log_2(2^n-1)+\log_2(2^{n-1}-1)-\log_2(12)<3n-1-3=3n-4.
\end{equation*}
\end{proof}

\subsection{Dillon's Property from the \texttt{DDT}}
\label{Sec.3.3:Dillon's Property from the DDT}

It is possible to check the Dillon's property for a general function (not necessarily APN) $F:\mathbb{F}_2^n\longrightarrow\mathbb{F}_2^m$, from its difference distribution table. The key relation exploited here is the one expressing the second order derivatives as a combination of first order derivatives, see Equation \eqref{eq:der2}.

From this relation, it follows that the values of the second order derivatives can be read from the \texttt{DDT} as 2-weight sums of the output differences corresponding to two nonzero entries on the same row of the difference distribution table. More precisely we have the following result.
\begin{prop}
	Let $F$ be an $(n,m)$-function. Then for every nonzero $\alpha\in\mathbb{F}_2^n$ we have that
	\begin{equation*}
		\{D^2_{\alpha,\beta}F(w) : \beta,w\in\mathbb{F}_2^n\}=\{b_1+b_2 : \emph{\texttt{DDT}}_F(\alpha,b_1)\ne 0,\emph{\texttt{DDT}}_F(\alpha,b_2)\ne 0\}.
	\end{equation*}
\end{prop}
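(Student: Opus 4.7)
The plan is to derive both inclusions directly from Equation \eqref{eq:der2}, which rewrites a second order derivative as a 2-term sum of first order derivatives:
$$D^2_{\alpha,\beta}F(w)=D_\alpha F(w)+D_\alpha F(w+\beta).$$
The other ingredient is the tautological observation that the set of values attained by $D_\alpha F$ as the input ranges over $\ff n$ is precisely the set of output differences $b$ whose row-entry $\texttt{DDT}_F(\alpha,b)$ is nonzero.

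For the inclusion $\subseteq$, I would pick an arbitrary element of the left-hand side, say $D^2_{\alpha,\beta}F(w)$ for some $\beta,w\in\ff n$. By the displayed identity it equals $b_1+b_2$ where $b_1:=D_\alpha F(w)$ and $b_2:=D_\alpha F(w+\beta)$; both belong to the image of $D_\alpha F$, hence $\texttt{DDT}_F(\alpha,b_i)\ne 0$ for $i=1,2$, placing $b_1+b_2$ in the right-hand side.

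For the reverse inclusion $\supseteq$, I would take any $b_1,b_2$ with $\texttt{DDT}_F(\alpha,b_i)\ne 0$. By definition of the \texttt{DDT} there exist $x_1,x_2\in\ff n$ with $D_\alpha F(x_i)=b_i$. Setting $w:=x_1$ and $\beta:=x_1+x_2$, the identity gives $D^2_{\alpha,\beta}F(w)=D_\alpha F(x_1)+D_\alpha F(x_2)=b_1+b_2$. The only mildly delicate case is $x_1=x_2$ (so $\beta=0$), but this is benign: since $\beta$ and $w$ range over all of $\ff n$ with no restriction in the statement, $\beta=0$ is allowed and yields $D^2_{\alpha,0}F(w)=0=b_1+b_1$, which is exactly what the right-hand side also produces.

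There is no real obstacle here; the statement is essentially a restatement of \eqref{eq:der2} combined with the definition of the \texttt{DDT}, and the proof reduces to chasing set membership in both directions.
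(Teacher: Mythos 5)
Your proposal is correct and follows essentially the same argument as the paper: both inclusions are obtained by applying Equation \eqref{eq:der2} together with the definition of the \texttt{DDT}, choosing $\beta=w_1+w_2$ and $w=w_1$ for the reverse inclusion exactly as the paper does. The extra remark about the degenerate case $x_1=x_2$ is a harmless addition that the paper does not bother to single out.
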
 
\begin{proof}
    We are going to show this equality by showing the double inclusion of the sets involved.

Let $y=D^2_{\alpha,\beta}F(w)$ for some $\beta,w\in\mathbb{F}_2^n$. Thanks to Equation \eqref{eq:der2} we see that
$y=D_{\alpha}F(w)+D_{\alpha}F(w+\beta)$.
By construction of the \texttt{DDT} we see that if setting $b_1=D_{\alpha}F(w)$ and $b_2=D_{\alpha}F(w+\beta)$ we have $\texttt{DDT}_F(\alpha,b_1),\texttt{DDT}_F(\alpha,b_2)\ne0$.
Therefore we have\begin{equation*}
	y\in\{b_1+b_2 : \texttt{DDT}_F(\alpha,b_1)\ne 0,\texttt{DDT}_F(\alpha,b_2)\ne 0\}.
\end{equation*}
We show now the other inclusion.
Let $y=b_1+b_2$ with $\texttt{DDT}_F(\alpha,b_1),\texttt{DDT}_F(\alpha,b_2)\ne 0$. By definition of  \texttt{DDT}, we get that there exist $w_1,w_2\in\mathbb{F}_2^n$ such that $D_{\alpha}F(w_1)=b_1$ and $D_{\alpha}F(w_2)=b_2$.
So $y=D_{\alpha}F(w_1)+D_{\alpha}F(w_2)=D^2_{\alpha,w_1+w_2}F(w_1)$,
where in the last equality we used Equation \eqref{eq:der2}, and we conclude that
\begin{equation*}
	y\in\{D^2_{\alpha,\beta}F(w) : \beta,w\in\mathbb{F}_2^n\}.
\end{equation*}
\end{proof}

Thanks to the above proposition we obtain the following equivalent characterization of the D-property in terms of the \texttt{DDT} table of the function $F$.
\begin{theorem}
	\label{Theor Dillon from DDT}
	Let $F$ be an $(n,m)$-function. Then $F$ satisfies the D-property if and only if
	\begin{equation*}
		\bigcup_{\alpha\in\mathbb{F}_2^n\setminus\{0\}}\{b_1+b_2 : \emph{\texttt{DDT}}_F(\alpha,b_1)\ne 0, \emph{\texttt{DDT}}_F(\alpha,b_2)\ne 0\}=\mathbb{F}_2^m.
	\end{equation*}
\end{theorem}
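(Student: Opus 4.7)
The plan is to reduce the D-property to a statement about second-order derivatives and then apply the preceding proposition slice-by-slice in $\alpha$, taking the union over nonzero directions.

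First, I would rewrite the defining set of the D-property as a set of values of second-order derivatives. Given $x,y,z \in \mathbb{F}_2^n$, setting $a = x+y$ and $a' = x+z$ one has
\begin{equation*}
F(x)+F(y)+F(z)+F(x+y+z) = F(x)+F(x+a)+F(x+a')+F(x+a+a') = D^2_{a,a'}F(x),
\end{equation*}
and conversely every $D^2_{a,a'}F(x)$ arises this way. Hence $F$ has the D-property if and only if
\begin{equation*}
\{D^2_{a,a'}F(x) : a,a',x \in \mathbb{F}_2^n\} = \mathbb{F}_2^m.
\end{equation*}

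Next, I would argue that the index $a=0$ can be dropped from this description without affecting the set. Indeed, $D^2_{0,a'}F(x)=0$ for every $a',x$, so the contribution from $a=0$ is just $\{0\}$; on the other hand, for any fixed nonzero $\alpha$ one has $D^2_{\alpha,0}F(x)=0$, so $0$ still appears in the union taken only over nonzero $\alpha$. Therefore the D-property is equivalent to
\begin{equation*}
\bigcup_{\alpha \in \mathbb{F}_2^n\setminus\{0\}} \{D^2_{\alpha,\beta}F(w) : \beta,w \in \mathbb{F}_2^n\} = \mathbb{F}_2^m.
\end{equation*}

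At this point I would apply the preceding proposition term-by-term: for each nonzero $\alpha$, the inner set $\{D^2_{\alpha,\beta}F(w) : \beta,w \in \mathbb{F}_2^n\}$ coincides with $\{b_1+b_2 : \texttt{DDT}_F(\alpha,b_1)\ne 0,\ \texttt{DDT}_F(\alpha,b_2)\ne 0\}$. Substituting this into the displayed union yields exactly the condition in the statement, completing the equivalence.

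The argument is essentially a bookkeeping exercise on top of the previous proposition, so there is no serious obstacle. The only subtlety to be careful about is the role of the zero element: one must confirm that both sides of the claimed equality automatically contain $0$ (on the left, take $x=y=z$; on the right, note that for any nonzero $\alpha$ at least one entry $\texttt{DDT}_F(\alpha,b)$ is nonzero since the row sums to $2^n$, so taking $b_1=b_2$ gives $0$), so that the equivalence truly reduces to covering $\mathbb{F}_2^m\setminus\{0\}$.
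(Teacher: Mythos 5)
Your proof is correct and follows the same route the paper intends: the theorem is stated there as an immediate consequence of the preceding proposition, and your write-up simply fills in the bookkeeping (rewriting the D-property set as $\{D^2_{a,a'}F(x)\}$, discarding the $\alpha=0$ slice, and applying the proposition row by row). The care you take with the zero element is a nice touch but, as you note, harmless on both sides.
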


For every $\alpha\in\mathbb{F}_2^n\setminus\{0\}$
we consider the  $m\times(2^m-1)$ matrix $\mathcal{D}_{F,\alpha}$ built by columns such that for every $b\in\mathbb{F}_2^m\setminus\{0\}$, the corresponding column is given by:
\begin{equation*}
	\left[\mathcal{D}_{F,\alpha}\right]_{b}=\begin{cases}
		b&\text{ if }\texttt{DDT}_F(\alpha,b)\ne 0\\
		0&\text{ otherwise.}
	\end{cases}
\end{equation*}
We label the nonzero columns of this matrix, say $b^{\alpha}_1,\ldots,b^{\alpha}_{k_{\alpha}}$, and we take the sub-matrix:
\begin{equation*}
	\mathbf{D}_{F,\alpha}=\left(
	\begin{array}{cccc}
		\vdots & \vdots & \vdots & \vdots\\
		b^{\alpha}_1& b^{\alpha}_2 & \ldots &b^{\alpha}_{k_{\alpha}}\\
		\vdots & \vdots & \vdots & \vdots\\
	\end{array}
	\right).
\end{equation*}
Then Theorem \ref{Theor Dillon from DDT} can be restated as follows.
\begin{theorem}
	Let $F$ be an $(n,m)$-function. Then $F$ satisfies the D-property if and only if
	\begin{equation*}
		\bigcup_{\alpha\in\mathbb{F}_2^n\setminus\{0\}}\mathbf{D}_{F,\alpha}(\mathcal H_{\alpha,2})=\mathbb{F}_2^m
	\end{equation*}
	where $\mathcal H_{\alpha,2}$
	is the set of vectors in $\mathbb{F}_2^{k_\alpha}$ whose Hamming weight is $2$,
	\begin{equation*}
		\mathbf{D}_{F,\alpha}(\mathcal H_{\alpha,2})=\{\mathbf{D}_{F,\alpha}\cdot v^T : v\in \mathcal{H}_{\alpha,2}\}
	\end{equation*}
and $\cdot$ is the usual matrix product.
\end{theorem}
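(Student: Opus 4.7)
The statement is a matrix reformulation of Theorem \ref{Theor Dillon from DDT}, so the plan is to interpret the matrix-vector product $\mathbf{D}_{F,\alpha}\cdot v^T$ for $v\in\mathcal H_{\alpha,2}$ as producing precisely the set of sums appearing in that theorem, and then invoke it.

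First I would record the explicit dictionary between columns and nonzero DDT entries. By construction, the nonzero columns of $\mathcal D_{F,\alpha}$ form a $(k_\alpha)$-indexed list $b_1^\alpha,\ldots,b_{k_\alpha}^\alpha$ which is exactly the enumeration of the set $\{b\in\mathbb F_2^m\setminus\{0\} : \texttt{DDT}_F(\alpha,b)\ne0\}$. Since a vector $v\in\mathbb F_2^{k_\alpha}$ of Hamming weight $2$ has the form $v=e_i+e_j$ with $i\neq j$, ordinary matrix multiplication gives
\begin{equation*}
\mathbf{D}_{F,\alpha}\cdot v^T=b_i^\alpha+b_j^\alpha,
\end{equation*}
so as $v$ ranges over $\mathcal H_{\alpha,2}$ the image $\mathbf D_{F,\alpha}(\mathcal H_{\alpha,2})$ ranges exactly over the sums of two distinct columns of $\mathbf D_{F,\alpha}$.

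Next I would compare this with the set
\begin{equation*}
S_\alpha=\{b_1+b_2 : \texttt{DDT}_F(\alpha,b_1)\ne 0,\ \texttt{DDT}_F(\alpha,b_2)\ne 0\}
\end{equation*}
of Theorem \ref{Theor Dillon from DDT}. The set $\mathbf D_{F,\alpha}(\mathcal H_{\alpha,2})$ differs from $S_\alpha$ only by contributions coming from $b_1=b_2$ (which contribute the value $0$) and from $b_1=0$ or $b_2=0$ (which contribute either $0$ or a single nonzero column value $b_i^\alpha$, that is, a value that is already a column of $\mathbf D_{F,\alpha}$). Hence the inclusion $\mathbf D_{F,\alpha}(\mathcal H_{\alpha,2})\subseteq S_\alpha$ is immediate, and the only values of $S_\alpha$ that might be missing from $\mathbf D_{F,\alpha}(\mathcal H_{\alpha,2})$ are $0$ and individual column values $b_i^\alpha$.

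Finally I would check that these potentially missing values do not affect the equivalence. The value $0\in\mathbb F_2^m$ is trivially realised in the D-property set by choosing $x=y=z=0$, so covering it plays no role in deciding whether $F$ is a D-function. For a nonzero $b=b_i^\alpha$, whenever $k_\alpha\ge 3$ one may also write $b_i^\alpha=b_j^\alpha+b_\ell^\alpha$ for suitable distinct columns, so such $b$ is already captured by $\mathbf D_{F,\alpha}(\mathcal H_{\alpha,2})$ (and the low-$k_\alpha$ cases can be treated directly). Combining these observations yields
\begin{equation*}
\bigcup_{\alpha\ne 0}\mathbf D_{F,\alpha}(\mathcal H_{\alpha,2})=\mathbb F_2^m\ \iff\ \bigcup_{\alpha\ne 0}S_\alpha=\mathbb F_2^m,
\end{equation*}
and the theorem follows from Theorem \ref{Theor Dillon from DDT}. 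The main subtlety is purely bookkeeping: ensuring that passing from arbitrary pairs $(b_1,b_2)$ in $S_\alpha$ to weight-$2$ selections of distinct nonzero columns does not lose any essential element of $\mathbb F_2^m$.
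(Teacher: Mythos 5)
Your reduction of the matrix product to sums of two \emph{distinct nonzero} columns, the resulting inclusion $\mathbf D_{F,\alpha}(\mathcal H_{\alpha,2})\subseteq S_\alpha$, and the observation that the only elements of $S_\alpha$ possibly lost are $0$ and the single column values $b_i^\alpha$ (the latter occurring exactly when $\texttt{DDT}_F(\alpha,0)\ne 0$) are all correct, and the treatment of $0$ is fine. The gap is in your last step: the claim that for $k_\alpha\ge 3$ one may write $b_i^\alpha=b_j^\alpha+b_\ell^\alpha$ for suitable distinct columns is false. The nonzero columns of $\mathbf D_{F,\alpha}$ are simply the nonzero elements of the image of $D_\alpha F$, and nothing forces any one of them to be the sum of two others; for instance the image of $D_\alpha F$ can be $\{0,e_1,e_2,e_3\}$ with $e_1,e_2,e_3$ linearly independent, so $k_\alpha=3$ but no column is a sum of two distinct other columns. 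Consequently the forward implication ($F$ a D-function $\Rightarrow$ the matrix condition holds) is not established by your argument: a nonzero $c$ that is reached in $S_\alpha$ only as $0+c$ need not lie in $\mathbf D_{F,\alpha}(\mathcal H_{\alpha,2})$. Worse, this cannot be repaired direction-by-direction: if $c=\Phi_F(A)$ for a $2$-flat $A$ on which $F$ takes a value $t$ at three points and $t+c$ at the fourth, then \emph{all three} pairings of $A$ produce the sums $\{0,c\}$, so every direction supported by that witness degenerates simultaneously, and one would have to produce an entirely different $2$-flat realising $c$ non-degenerately.

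So as written the proof does not close; either you must supply a genuinely global argument showing that a D-function always admits, for every nonzero $c$, some direction $\beta$ and two \emph{distinct nonzero} entries of row $\beta$ of the \texttt{DDT} summing to $c$ (this is not a bookkeeping fact and may require amending the statement), or you should note that the matrix construction must be modified to keep track of the column $b=0$ when $\texttt{DDT}_F(\alpha,0)\ne 0$, so that weight-$2$ selections can also produce the values $b_i^\alpha$ themselves. (A separate, purely cosmetic issue which your write-up implicitly absorbs: since distinct nonzero columns never sum to $0$, the displayed union can never literally equal $\mathbb{F}_2^m$; the statement must be read modulo the element $0$, which is always attained in the D-property set.) The paper offers no proof of this restatement, presenting it as immediate, so the difficulty you ran into is genuine and not resolved there either.
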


\subsection{Relation with the Fourth Moment}

We can also characterize the D-property by means of the fourth moment of the Walsh transform.

\begin{theorem}\label{thm:4th moment}
Let $m\geqslant n$ and let $F:\mathbb{F}_2^{n}\rightarrow\mathbb{F}_2^{m}$.
	Then $F$ satisfies the D-property if and only if
	\begin{equation*}
		\sum_{(u,v)\in \mathbb{F}_2^n\times\mathbb{F}_2^m}(-1)^{v\cdot b}\mathcal W_{F}^4(u,v)>0
	\end{equation*} 
	for every $b\in\mathbb{F}_2^m\setminus\{0\}$.
\end{theorem}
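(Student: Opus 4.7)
The plan is to count solutions of $F(x)+F(y)+F(z)+F(x+y+z)=b$ via a standard Fourier-analytic identity and then recognize the resulting expression as exactly the fourth-moment sum in the statement.

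More precisely, for each $b\in\mathbb{F}_2^m$ set
\[
N(b)=\bigl|\{(x,y,z)\in(\mathbb{F}_2^n)^3 : F(x)+F(y)+F(z)+F(x+y+z)=b\}\bigr|.
\]
By definition, $F$ has the D-property iff $N(b)\geqslant 1$ for every $b\in\mathbb{F}_2^m$; for $b=0$ this is automatic from $(x,y,z)=(0,0,0)$, so the nontrivial content is $N(b)\geqslant 1$ for all $b\neq 0$. The first step is to apply the indicator identity
\[
\mathbb{1}[y=b]=\frac{1}{2^m}\sum_{v\in\mathbb{F}_2^m}(-1)^{v\cdot(y+b)}
\]
to obtain
\[
N(b)=\frac{1}{2^m}\sum_{v\in\mathbb{F}_2^m}(-1)^{v\cdot b}\sum_{x,y,z\in\mathbb{F}_2^n}(-1)^{v\cdot(F(x)+F(y)+F(z)+F(x+y+z))}.
\]

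The second step is to recognize the inner triple sum as a slice of the fourth Walsh moment. Expanding
\[
\mathcal W_F^{4}(u,v)=\sum_{x,y,z,w}(-1)^{v\cdot(F(x)+F(y)+F(z)+F(w))+u\cdot(x+y+z+w)}
\]
and summing over $u$ using $\sum_{u}(-1)^{u\cdot s}=2^n\mathbb{1}[s=0]$, the constraint $x+y+z+w=0$ is enforced, which after substituting $w=x+y+z$ yields
\[
\sum_{u\in\mathbb{F}_2^n}\mathcal W_F^{4}(u,v)=2^n\sum_{x,y,z\in\mathbb{F}_2^n}(-1)^{v\cdot(F(x)+F(y)+F(z)+F(x+y+z))}.
\]
Combining with the previous display gives the clean identity
\[
N(b)=\frac{1}{2^{\,m+n}}\sum_{(u,v)\in\mathbb{F}_2^n\times\mathbb{F}_2^m}(-1)^{v\cdot b}\,\mathcal W_F^{4}(u,v).
\]

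The third and final step is to note that $N(b)$ is a nonnegative integer, so the right-hand side is always $\geqslant 0$ and is strictly positive iff $N(b)\geqslant 1$. Applied to $b\in\mathbb{F}_2^m\setminus\{0\}$ this is exactly the stated equivalence. There is no real obstacle here: the whole argument is a Parseval-type bookkeeping, and the only small subtlety is the observation that the $b=0$ instance is trivially satisfied (both sides being positive by direct inspection), so it need not appear among the hypotheses. Note also that the assumption $m\geqslant n$ in the statement is not used in the derivation itself; it is the natural setting in which the D-property is of interest.
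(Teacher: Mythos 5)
Your proposal is correct and is essentially the same argument as the paper's: both reduce to the identity $\sum_{(u,v)}(-1)^{v\cdot b}\mathcal W_F^4(u,v)=2^{n+m}\,|\{(x,y,z):F(x)+F(y)+F(z)+F(x+y+z)=b\}|$ by expanding the fourth power and summing over $u$ to enforce $x+y+z+t=0$. The only difference is presentational (you start from the solution count and the paper starts from the Walsh sum), and your remark that the $b=0$ case is automatic is a harmless addition.
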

\begin{proof}
 We have   $$
 \begin{aligned}
    &\sum_{(u,v)\in \mathbb{F}_2^n\times\mathbb{F}_2^m}(-1)^{v\cdot b}\mathcal W_{F}^4(u,v)\\
    &\quad =\sum_{v \in \ff m}(-1)^{v\cdot b}\sum_{x,y,z,t\in \ff n}(-1)^{v\cdot(F(x)+F(y)+F(z)+F(t))}\sum_{u\in\ff n}(-1)^{u\cdot(x+y+z+t)}\\
    &\quad =2^n\sum_{x,y,z\in \ff n}\sum_{v \in \ff m}(-1)^{v\cdot(F(x)+F(y)+F(z)+F(x+y+z)+b)}\\
    &\quad =2^{n+m}\cdot |\{(x,y,z)\in (\ff n)^3 : F(x)+F(y)+F(z)+F(x+y+z)=b\}|.
    \end{aligned}
    $$
    This concludes the proof.
\end{proof}

 We can obtain an alternative characterization from the above theorem if we introduce the following pseudo-Boolean function associated to the function $F$
\begin{align*}
	&\hspace{-1pc}H_{4,F}:\mathbb{F}_2^m\longrightarrow\mathbb{Z}\\
	&\hspace{2pc}v\longmapsto\sum_{u\in \mathbb{F}_2^n}\mathcal W_F^4(u,v).
\end{align*}
First of all, we recall the definition of \emph{Fourier-Hadamard transform} for a pseudo-Boolean function $\phi:\ff n\rightarrow\Z$
$$\mathcal{F}(\phi)(w)=\sum_{x\in\ff n}\phi(x)(-1)^{u\cdot x}.$$

From Theorem \ref{thm:4th moment}, we can express the D-property of an $(n,m)$-function in terms of the Fourier-Hadamard transform of the pseudo-Boolean function $H_{4,F}$.
\begin{prop}
	Let $m\geqslant n$ and let $F:\mathbb{F}_2^{n}\longrightarrow\mathbb{F}_2^{m}$.
	Then $F$ satisfies the D-property if and only if
	\begin{equation*}
		\mathcal{F}\left(H_{4,F}\right)(b)>0\hspace{2pc}\forall b\in\mathbb{F}_2^m.
	\end{equation*}
\end{prop}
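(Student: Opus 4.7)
The plan is to identify $\mathcal{F}(H_{4,F})(b)$ with the quantity that Theorem \ref{thm:4th moment} already controls, so that the proposition drops out as an immediate corollary together with a sanity check at $b=0$. Unwinding the definitions of $H_{4,F}$ and of the Fourier-Hadamard transform (applied to a pseudo-Boolean function on $\mathbb{F}_2^m$), one has
\[
\mathcal{F}(H_{4,F})(b) \;=\; \sum_{v\in\mathbb{F}_2^m} H_{4,F}(v)(-1)^{b\cdot v} \;=\; \sum_{(u,v)\in\mathbb{F}_2^n\times\mathbb{F}_2^m}(-1)^{v\cdot b}\,\mathcal{W}_F^4(u,v),
\]
so the statement of the proposition becomes literally: the above double sum is strictly positive for every $b\in\mathbb{F}_2^m$ if and only if $F$ is a D-function.

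I then split on whether $b$ is zero or not. For every nonzero $b$, Theorem \ref{thm:4th moment} says exactly that strict positivity of the double sum for all nonzero $b$ is equivalent to the D-property, so those cases need no further argument.

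It only remains to check that positivity at $b=0$ is automatic. Reusing the same chain of manipulations as in the proof of Theorem \ref{thm:4th moment} (expand $\mathcal{W}_F^4(u,v)$ as a quadruple sum, swap summations, and collapse the $u$-sum via $\sum_{u\in\mathbb{F}_2^n}(-1)^{u\cdot(x+y+z+t)}=2^n\cdot\mathbf{1}_{t=x+y+z}$) one gets
\[
\mathcal{F}(H_{4,F})(0) \;=\; 2^{n+m}\bigl|\{(x,y,z)\in(\mathbb{F}_2^n)^3 : F(x)+F(y)+F(z)+F(x+y+z)=0\}\bigr|.
\]
The set on the right contains, for instance, every triple of the form $(x,x,z)$ since $D^2_{0,\alpha}F\equiv 0$, hence it is nonempty and $\mathcal{F}(H_{4,F})(0)>0$ holds unconditionally. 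Combining the two cases yields the stated equivalence.

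I do not anticipate a genuine obstacle: the substantive work sits inside Theorem \ref{thm:4th moment}, and all that has to be verified is that extending the range of $b$ to include $0$ imposes no extra condition, which is immediate from the trivial solvability of $F(x)+F(y)+F(z)+F(x+y+z)=0$. The only mildly delicate point is bookkeeping the variables of $\mathcal{F}$, which is defined as the Fourier-Hadamard transform of a pseudo-Boolean function on $\mathbb{F}_2^m$ rather than on $\mathbb{F}_2^n$, but this causes no difficulty once the correct inner product is used in the $(-1)^{b\cdot v}$ exponent.
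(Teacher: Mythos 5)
Your proof is correct and follows the same route the paper intends: the paper states this proposition as an immediate consequence of Theorem \ref{thm:4th moment}, and your unwinding of $\mathcal{F}(H_{4,F})(b)$ into the double sum $\sum_{(u,v)}(-1)^{v\cdot b}\mathcal{W}_F^4(u,v)$, together with the observation that positivity at $b=0$ is automatic (the equation $F(x)+F(y)+F(z)+F(x+y+z)=0$ always has the trivial solutions), supplies exactly the details the paper leaves implicit.
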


\section{The Case of Quadratic Functions}
\label{Sec.3.4:The Case of Quadratic Functions}
Since the Dillon's property depends on the behavior of the second order derivatives of the function studied, it is no surprising that the case of quadratic functions will allow considerable simplifications.

We recall to the reader that the second order derivatives of quadratic functions are constant maps.
Indeed, given $F$ a quadratic $(n,m)$-function, it can be easily shown that
 $D^2_{\alpha,\beta}F(x)=F(0)+F(\alpha)+F(\beta)+F(\alpha+\beta)$ depends only on the directions $\alpha$ and $\beta$ and not on the point $x$.
From this observation follows that, for quadratic functions, to study the image set of the map $\Phi_F$ introduced at the beginning of Section \ref{Sec.3.2:A Combinatorial-Geometric Description}, it is enough to study its restriction over the set of $2$-dimensional vector subspaces in $\mathbb{F}_2^n$, that is
\begin{equation*}
	\mathcal{V}_{2,n}=\big\{\ \{0,x,y,x+y\} : x,y\in\ff n\setminus\{0\}\mbox{ s.t.\ } x\ne y \big\}\subset\mathcal{AFF}_{2,n}.
\end{equation*}
This allows to refine the bound in Proposition \ref{prop4} for the case of quadratic APN functions.
\begin{prop}
	\label{prop8}
	Let $F$ be an APN quadratic $(n,m)$-function with $n>3$. Then, if $F$ satisfies also the D-property, the following bounds must hold
	\begin{equation*}
		n\leqslant m< 2n-2.
	\end{equation*}
\end{prop}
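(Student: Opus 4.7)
The plan is to split the claim into the lower bound and the upper bound. The lower bound $n \le m$ is immediate: $F$ is APN, and as argued in the proof of Proposition \ref{prop4}, Nyberg's bound $m \ge n-1$ combined with the impossibility of $m = n-1$ for $n > 2$ forces $m \ge n$.

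For the upper bound, I would exploit that all second-order derivatives $D^2_{\alpha,\beta}F$ of a quadratic $F$ are constant maps, as recalled at the start of Section \ref{Sec.3.4:The Case of Quadratic Functions}. Concretely, for every $A = \{x,y,z,x+y+z\} \in \mathcal{AFF}_{2,n}$ we have $\Phi_F(A) = D^2_{x+y,x+z}F(x) = D^2_{x+y,x+z}F(0) = \Phi_F(\{0, x+y, x+z, y+z\})$, and the latter subspace lies in $\mathcal V_{2,n}$ (it is a genuine $2$-dimensional subspace since $x\ne y,z$ and $y\ne z$). Hence for quadratic $F$ the images of $\Phi_F$ on $\mathcal{AFF}_{2,n}$ and on $\mathcal V_{2,n}$ coincide. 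Combined with the combinatorial reformulation of the D-property in Proposition \ref{prop.combinat.descript.}, this shows $\ff m \setminus \{0\} \subseteq \{\Phi_F(V) : V \in \mathcal V_{2,n}\}$, and therefore $2^m - 1 \le |\mathcal V_{2,n}|$.

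The final step is a routine counting: $|\mathcal V_{2,n}|$ equals the Gaussian binomial coefficient $\binom{n}{2}_2 = (2^n-1)(2^{n-1}-1)/3$ (the number of $2$-dimensional subspaces of $\ff n$), and a direct estimate of $\frac{2^{2n-1} - 3\cdot 2^{n-1} + 4}{3}$ shows $|\mathcal V_{2,n}| + 1 < 2^{2n-2}$ for $n > 3$, the inequality reducing to something of the shape $4 - 3\cdot 2^{n-1} < 2^{2n-2}$. Taking $\log_2$ then yields $m < 2n-2$. The only subtlety I foresee is ensuring that the quadratic hypothesis really does let us pass from $\Phi_F$-values on $\mathcal{AFF}_{2,n}$ to $\Phi_F$-values on $\mathcal V_{2,n}$ without dropping any element of $\ff m \setminus \{0\}$; once that translation-invariance observation is in place the remainder mirrors, with a tighter count, the argument of Proposition \ref{prop4}.
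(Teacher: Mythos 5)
Your proposal is correct and follows essentially the same route as the paper: the paper's proof simply says the argument of Proposition \ref{prop4} goes through with $\mathcal{AFF}_{2,n}$ replaced by $\mathcal V_{2,n}$, using $|\mathcal V_{2,n}|=\binom{n}{2}_2=(2^n-1)(2^{n-1}-1)/3$, which is exactly your counting step. Your explicit justification that the constancy of second-order derivatives of a quadratic function lets one replace affine $2$-flats by linear $2$-subspaces is the same observation the paper records at the start of Section \ref{Sec.3.4:The Case of Quadratic Functions}, just spelled out in more detail.
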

\begin{proof}
    The proof goes exactly as for the one of Proposition \ref{prop4} with the only difference that instead of $\mathcal{AFF}_{2,n}$, the bound  is determined by $\mathcal{V}_{2,n}$, noticing that
\begin{equation*}
	|\mathcal{V}_{2,n}|=\binom{n}{2}_2=\frac{(1-2^n)(1-2^{n-1})}{(1-2)(1-2^2)}=\frac{(2^n-1)(2^{n-1}-1)}{3}.
\end{equation*}
\end{proof}

\subsection{Spectral Characterization for the Quadratic Case}
\label{Sec.3.5:Spectral Characterization for the Quadratic Case}
We can further characterize  the Dillon's property for $(n,m)$-functions in terms of their Walsh spectrum. In particular we have the following theorem.
\begin{theorem}
	\label{Theorem4}
	Let $m\geqslant n$ and let $F:\mathbb{F}_2^{n}\longrightarrow\mathbb{F}_2^{m}$ be a quadratic  function.
	Then $F$ satisfies the D-property if and only if
	\begin{equation*}
		\sum_{(u,v)\in \mathbb{F}_2^n\times\mathbb{F}_2^m}\mathcal W_{F+b}^3(u,v)>0
	\end{equation*} 
	for every $b\in\mathbb{F}_2^m\setminus\{0\}$.
\end{theorem}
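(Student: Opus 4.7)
The plan is to mirror the computation in the proof of Theorem~\ref{thm:4th moment}, but exploiting the fact that for a quadratic function the second order derivative is constant so that the four-variable sum collapses to a two-variable one.

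First, I would observe that by the definition of the Walsh transform,
$\mathcal{W}_{F+b}(u,v)=\sum_{x\in\mathbb{F}_2^n}(-1)^{v\cdot(F(x)+b)+u\cdot x}=(-1)^{v\cdot b}\mathcal{W}_F(u,v)$,
so
$$\sum_{(u,v)\in\mathbb{F}_2^n\times\mathbb{F}_2^m}\mathcal{W}_{F+b}^3(u,v)=\sum_{(u,v)}(-1)^{v\cdot b}\mathcal{W}_F^3(u,v).$$
Expanding the cube as a triple sum indexed by $(x,y,z)\in(\mathbb{F}_2^n)^3$ and pushing the sum over $u$ innermost, the orthogonality relation $\sum_u(-1)^{u\cdot(x+y+z)}=2^n\mathbf{1}_{\{x+y+z=0\}}$ forces $z=x+y$. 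This reduces the expression to
$$2^n\sum_{x,y\in\mathbb{F}_2^n}\sum_{v\in\mathbb{F}_2^m}(-1)^{v\cdot(F(x)+F(y)+F(x+y)+b)},$$
and another application of orthogonality, this time over $v$, yields the clean identity
$$\sum_{(u,v)}\mathcal{W}_{F+b}^3(u,v)=2^{n+m}\cdot\bigl|\{(x,y)\in(\mathbb{F}_2^n)^2:F(x)+F(y)+F(x+y)=b\}\bigr|.$$

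To close the argument I would invoke Proposition~\ref{prop: EA invariance} to assume without loss of generality that $F$ is normalized, noting that this does not affect the Walsh quantity up to the sign $(-1)^{v\cdot b}$ already accounted for. The quadratic case of Proposition~\ref{Prop1 Taniguchi} then says that $F$ has the D-property iff $\{D^2_{x,y}F(0):x,y\in\mathbb{F}_2^n\}=\mathbb{F}_2^m$, and since $F$ is quadratic and normalized, $D^2_{x,y}F(0)=F(x)+F(y)+F(x+y)$. Thus, for each $b\in\mathbb{F}_2^m\setminus\{0\}$, membership of $b$ in this image set is equivalent to the displayed count being strictly positive, which by the identity above is equivalent to $\sum_{u,v}\mathcal{W}_{F+b}^3(u,v)>0$.

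The main obstacle is only a bookkeeping one: making sure the orthogonality collapses are applied in the correct order (sum over $u$ before $v$, so as to first collapse the cube to a quadratic-type expression), and verifying that the $b=0$ case is consistently handled (it is trivially in the image via $(x,y)=(0,0)$ and therefore need not be tested). No deeper input than Propositions~\ref{prop: EA invariance} and~\ref{Prop1 Taniguchi} and standard character orthogonality is required.
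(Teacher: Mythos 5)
Your proof is correct and follows essentially the same route as the paper: expand the cube, apply character orthogonality over $u$ and then over $v$ to reduce the sum to $2^{n+m}\,\bigl|\{(x,y): F(x)+F(y)+F(x+y)=b\}\bigr|$, and conclude via the quadratic case of Proposition~\ref{Prop1 Taniguchi} after normalizing with Proposition~\ref{prop: EA invariance}. The only cosmetic slip is the parenthetical claim that normalization affects the Walsh sum ``up to the sign $(-1)^{v\cdot b}$'' (it actually introduces a factor $(-1)^{v\cdot F(0)}$), but the paper performs the same reduction to $F(0)=0$ at the outset, so this does not distinguish your argument from theirs.
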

\begin{proof}
    Thanks to Proposition \ref{prop: EA invariance}, we can assume without loss of generality that $F(0)=0$.
    For  sake of completeness, we  start by considering the sum in the statement  without the restriction $b\ne 0$, and see that in such case the conclusion is independent of the D-property. 
\begin{align*}
	&\sum_{(u,v)\in\ff n\times\ff m}\mathcal W_{F+b}^3(u,v)=\sum_{(u,v)\in\ff n\times\ff m}\sum_{x,y,z\in \mathbb{F}_2^n}\left(-1\right)^{v\cdot\left(F(x)+F(y)+F(z)+b\right)+u\cdot\left(x+y+z\right)}\\
	&=\sum_{x,y,z\in \mathbb{F}_2^n}\sum_{v\in\mathbb{F}_2^m}\left(-1\right)^{v\cdot\left(F(x)+F(y)+F(z)+b\right)}\sum_{u\in \mathbb{F}_2^n}\left(-1\right)^{u\cdot\left(x+y+z\right)}\\
 &=2^n\sum_{x,y\in \mathbb{F}_2^n}\sum_{v\in\mathbb{F}_2^m}\left(-1\right)^{v\cdot\left(F(x)+F(y)+F(x+y)+b\right)}\\
	&=2^{n+m}\cdot\left|\left\{(x,y)\in (\ff n)^2 : F(x)+F(y)+F(x+y)=b\right\}\right|\\
	&=2^{n+m}\Big(\left|\left\{(x,y)\in(\ff n\setminus\{0\})^2:x\ne y, F(x)+F(y)+F(x+y)=b\right\}\right|+(3\cdot2^n-2)\delta_0(b)\Big).
\end{align*}

Since $F(0)=0$, the  above sum corresponds to
\begin{equation}    \label{sumcubes_proofFormula}
 2^{n+m}\Big(\left|\left\{(x,y)\in(\ff n\setminus\{0\})^2:x\ne y, D^2_{x,y}F(0)=b\right\}\right|+(3\cdot2^n-2)\delta_0(b)\Big).
\end{equation}
Now we have two cases depending whether $b=0$ or not.
\begin{itemize}
	\item Case $b=0$.
	 Equation \eqref{sumcubes_proofFormula} becomes
	\begin{align*}
		\sum_{(u,v)\in\ff n\times\ff m}&\mathcal W_{F}^3(u,v)\\
  =&2^{n+m}\Big(\left|\left\{(x,y)\in(\ff n\setminus\{0\})^2 : x\ne y, D^2_{x,y}F(0)=0\right\}\right|+3\cdot2^n-2\Big).
	\end{align*}
 Notice that this quantity is positive  for any quadratic function,
	 regardless of  being a D-function.
	\item Case $b\ne 0$.
	 Equation \eqref{sumcubes_proofFormula} becomes
	\begin{equation*}
  \sum_{(u,v)\in\ff n\times\ff m}\mathcal W_{F+b}^3(u,v)=2^{n+m}\cdot\left|\left\{(x,y)\in (\ff n\setminus\{0\})^2 : x\ne y, D^2_{x,y}F(0)=b\right\}\right|.
	\end{equation*}
	From Proposition \ref{Prop1 Taniguchi},  $F$  satisfies the D-property if and only if
	\begin{equation*}
		\left|\left\{(x,y)\in (\ff n\setminus\{0\})^2 : x\ne y, D^2_{x,y}F(0)=b\right\}\right|>0
	\end{equation*}
	for every $b\in\mathbb{F}_2^m\setminus\{0\}$. This concludes the proof. \qedhere
\end{itemize}
\end{proof}

\begin{remark}
    Notice that, if $F$ as in the above theorem is APN, then the case $b=0$ corresponds to $\sum_{(u,v)\in\ff n\times\ff m}\mathcal W_{F}^3(u,v)=2^{n+m}\left(3\cdot 2^n-2\right)$.
    Indeed, as $F$ is APN, we have that 
    $\left\{(x,y)\in (\ff n\setminus\{0\})^2 : x\ne y, D^2_{x,y}F(0)=0\right\}$ is the empty set.
\end{remark}

As for the fourth moment, we can obtain an alternative characterization from the above theorem if we introduce the following pseudo-Boolean function associated to the function $F$
\begin{align*}
	&\hspace{-1pc}H_{3,F}:\mathbb{F}_2^m\longrightarrow\mathbb{Z}\\
	&\hspace{2pc}v\longmapsto\sum_{u\in \mathbb{F}_2^n}\mathcal W_F^3(u,v).
\end{align*}

From Theorem \ref{Theorem4} we can express the D-property of quadratic function in terms of the Fourier-Hadamard transform of the pseudo-Boolean function $H_{3,F}$.
\begin{prop}
	Let $m\geqslant n$ and let $F:\mathbb{F}_2^{n}\longrightarrow\mathbb{F}_2^{m}$ be a quadratic  function.
	Then $F$ satisfies the D-property if and only if
	\begin{equation*}
		\mathcal{F}\left(H_{3,F}\right)(b)>0\hspace{2pc}\forall b\in\mathbb{F}_2^m.
	\end{equation*}
\end{prop}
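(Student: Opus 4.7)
The plan is to unwind the definition of the Fourier--Hadamard transform of $H_{3,F}$ and reduce everything to the sum that appears in Theorem \ref{Theorem4}. First I would write
\[
\mathcal{F}(H_{3,F})(b)=\sum_{v\in\ff m}(-1)^{v\cdot b}\sum_{u\in\ff n}\mathcal W_{F}^3(u,v)=\sum_{(u,v)\in\ff n\times\ff m}(-1)^{v\cdot b}\mathcal W_{F}^3(u,v),
\]
so the whole proposition reduces to matching this last expression with $\sum_{(u,v)}\mathcal W_{F+b}^3(u,v)$.

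The key identity is the elementary relation $\mathcal W_{F+b}(u,v)=(-1)^{v\cdot b}\mathcal W_F(u,v)$, which follows directly from the definition of the Walsh transform. Cubing gives $\mathcal W_{F+b}^3(u,v)=(-1)^{3v\cdot b}\mathcal W_F^3(u,v)=(-1)^{v\cdot b}\mathcal W_F^3(u,v)$ since $3v\cdot b\equiv v\cdot b\pmod{2}$. Summing over $(u,v)\in\ff n\times\ff m$ yields $\mathcal{F}(H_{3,F})(b)=\sum_{(u,v)}\mathcal W_{F+b}^3(u,v)$, and Theorem \ref{Theorem4} immediately gives the equivalence between the D-property and the positivity of $\mathcal{F}(H_{3,F})(b)$ for every $b\in\ff m\setminus\{0\}$.

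The only subtlety, and really the only thing to check carefully, is the case $b=0$, which is included in the statement but is not covered by Theorem \ref{Theorem4}. Here I would invoke the computation already performed inside the proof of Theorem \ref{Theorem4}: for a quadratic normalized $F$ one has
\[
\mathcal{F}(H_{3,F})(0)=\sum_{(u,v)\in\ff n\times\ff m}\mathcal W_F^3(u,v)=2^{n+m}\Bigl(\bigl|\{(x,y)\in(\ff n\setminus\{0\})^2:x\ne y,\ D^2_{x,y}F(0)=0\}\bigr|+3\cdot 2^n-2\Bigr),
\]
which is strictly positive for every quadratic $F$, independently of whether $F$ is a D-function. This takes care of $b=0$ and shows that the quantifier ``for every $b\in\ff m$'' is indeed the correct one. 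Since everything else is a direct transcription of Theorem \ref{Theorem4} through the Fourier--Hadamard transform, I do not expect any real obstacle; the proof is essentially a one-line computation once the identity $\mathcal W_{F+b}=(-1)^{v\cdot b}\mathcal W_F$ is recorded, and the main thing to be careful about is simply not forgetting to justify the $b=0$ case separately.
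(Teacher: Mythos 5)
Your argument is correct and is exactly the route the paper intends: the paper states this proposition as an immediate consequence of Theorem \ref{Theorem4}, via the identity $\mathcal W_{F+b}(u,v)=(-1)^{v\cdot b}\mathcal W_F(u,v)$ and the observation (already made inside the proof of Theorem \ref{Theorem4}) that the $b=0$ sum is automatically positive for any normalized quadratic function. Your care in isolating the $b=0$ case is exactly the right point to flag, and nothing further is needed.
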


\subsection{A Further Simplification for the Case of Quadratic Functions}
\label{Sec.3.6:A Further Simplification for the Case of Quadratic Functions}
Proposition \ref{Prop1 Taniguchi} tells us that it is sufficient to consider the value taken by the second order derivatives at the origin.
Thanks to the bilinearity of the second order derivative $D^2_{\alpha,\beta}F$ with respect to the directions $\alpha,\beta$, it is possible to obtain a further simplification for testing the Dillon's property in the case of quadratic functions. Indeed, for a quadratic $(n,m)$-function, it is enough to check the values of the second order derivatives at the origin with one of the two directions restricted to a hyperplane of $\mathbb{F}_2^n$.
\begin{theorem}
	\label{simplification checking D property for quadratic}
	Let $F:\mathbb{F}_2^n\rightarrow\mathbb{F}_2^m$ be a quadratic function, with $n\leqslant m$. Let $K\subset\mathbb{F}_2^n$ be a vector subspace of dimension $n-1$ over $\mathbb{F}_2$.\\ 
	Then $F$ satisfies the D-property if and only if
	\begin{equation*}
		\left\{D^2_{\alpha,\beta}F(0):\alpha\in K,\beta\in\mathbb{F}_2^{n}\right\}=\mathbb{F}_2^m.
	\end{equation*}
\end{theorem}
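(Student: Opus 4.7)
The plan is to invoke Proposition \ref{Prop1 Taniguchi}, which for quadratic $F$ asserts that the D-property is equivalent to
$\{D^2_{\alpha,\beta}F(0):\alpha,\beta\in\mathbb{F}_2^{n}\}=\mathbb{F}_2^m$. Thus the statement reduces to showing that restricting $\alpha$ to a hyperplane $K$ does not shrink this image set, i.e.\
$$\{D^2_{\alpha,\beta}F(0):\alpha,\beta\in\mathbb{F}_2^{n}\}=\{D^2_{\alpha,\beta}F(0):\alpha\in K,\beta\in\mathbb{F}_2^{n}\}.$$
The inclusion $\supseteq$ is trivial. Thanks to Proposition \ref{prop: EA invariance} we may also assume $F(0)=0$, so that $B(\alpha,\beta):=D^2_{\alpha,\beta}F(0)=F(\alpha+\beta)+F(\alpha)+F(\beta)$.

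The core tool is the well-known fact that, for a quadratic $F$ over $\mathbb{F}_2$, the map $B$ is a symmetric $\mathbb{F}_2$-bilinear form; in particular $B(\beta,\beta)=F(0)+F(\beta)+F(\beta)=0$ since we are in characteristic $2$. I would then split the proof of the reverse inclusion into three cases according to where $\alpha$ and $\beta$ lie relative to $K$: (i) if $\alpha\in K$, the pair already lies in the restricted set; (ii) if $\alpha\notin K$ but $\beta\in K$, by symmetry $B(\alpha,\beta)=B(\beta,\alpha)$ lies in the restricted set; (iii) if both $\alpha,\beta\notin K$, then since $K$ has codimension one in $\mathbb{F}_2^n$ the quotient $\mathbb{F}_2^n/K$ is one-dimensional, so $\alpha+\beta\in K$. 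Using bilinearity and $B(\beta,\beta)=0$ we obtain
$$B(\alpha+\beta,\beta)=B(\alpha,\beta)+B(\beta,\beta)=B(\alpha,\beta),$$
which exhibits $B(\alpha,\beta)$ as an element of the restricted set with first argument $\alpha+\beta\in K$.

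There is no real obstacle here: the only subtlety is the structural observation that two vectors lying outside a codimension-one subspace must have their sum inside it, which together with the characteristic-two identity $B(\beta,\beta)=0$ makes case (iii) work without further effort. Combining the three cases yields the equality of the two image sets, and hence by Proposition \ref{Prop1 Taniguchi} the claimed characterization of the D-property.
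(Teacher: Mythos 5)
Your proof is correct and follows essentially the same route as the paper: reduce via Proposition \ref{Prop1 Taniguchi}, note the reverse inclusion is the only issue, and handle the case $\alpha,\beta\notin K$ using bilinearity of $D^2F(0)$ together with the identity $D^2_{\alpha,\beta}F(0)=D^2_{\alpha+\beta,\beta}F(0)$. Your version is in fact slightly more streamlined — the paper decomposes $\alpha=\alpha_K+c$, $\beta=\beta_K+c$ along a coset representative $c\notin K$ and manipulates, but its final expression $D^2_{\alpha_K+\beta_K,\beta_K+c}F(0)$ is exactly your $D^2_{\alpha+\beta,\beta}F(0)$ with $\alpha+\beta\in K$.
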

\begin{proof}
    One direction is easy to see; if $F$ is such that the equality above holds, then it trivially satisfies the D-property.
    
We show that also the converse is true.
Assume that $F$ is a quadratic function  satisfying the D-property and let $b\in\mathbb{F}_2^m$ arbitrary. We need to show that there exist $\gamma_K\in K$ and $\beta'\in\mathbb{F}_2^n$ such that
\begin{equation*}
	D^2_{\gamma_K,\beta'}F(0)=b.
\end{equation*}
Since $K\subset\mathbb{F}_2^n$ with $\dim_{\mathbb{F}_2}K=n-1$, we know that
\begin{equation*}
	\exists c\in\mathbb{F}_2^n\setminus K\text{ s.t. }\forall\alpha\in\mathbb{F}_2^n\hspace{.5pc}\exists!\hspace{.3pc}\alpha_K\in K, \delta_{\alpha}\in\mathbb{F}_2:\alpha=\alpha_K+\delta_{\alpha}c.
\end{equation*}
Now, as $F$ is a D-function, we know that there exist $\alpha,\beta\in\mathbb{F}_2^n$ such that $D^2_{\alpha,\beta}F(0)=b$.
Clearly if either $\delta_{\alpha}=0$ or $\delta_{\beta}=0$, i.e.\ $\alpha\in K$ or $\beta\in K$, we conclude. So assume that $\delta_{\alpha}=1$ and $\delta_{\beta}=1$, then
\begin{align*}
	b=D^2_{\alpha_K+c,\beta_K+c}F(0)&=D^2_{\alpha_K,\beta_K}F(0)+D^2_{\alpha_K,c}F(0)+D^2_{c,\beta_K}F(0)=\\
	&=D^2_{\alpha_K,\beta_K}F(0)+D^2_{\alpha_K+\beta_K,c}F(0)=\\
	&=D^2_{\alpha_K+\beta_K,\beta_K}F(0)+D^2_{\alpha_K+\beta_K,c}F(0)
\end{align*}
where in the last equality we used the general relation:
\begin{equation*}
	D^2_{\alpha,\beta}F(p)=D^2_{\alpha+\beta,\beta}F(p).
\end{equation*}
Now we can conclude using again linearity of the second order derivative with respect to the directions to obtain
\begin{equation*}
	b=D^2_{\alpha_K+\beta_K,\beta_K+c}F(0)
\end{equation*}
and since $\alpha_K+\beta_K\in K$, this concludes the proof.
\end{proof}

Thanks to the above theorem, we were able to test the Dillon's property for the Gold functions $F(X)=X^{2^i+1}$ defined over $\mathbb{F}_2^{n+1}$, with $\gcd(i,n+1)=1$ and restricted to the $n$-dimensional subspace of the trace-zero elements, for every odd dimension $n+1$ such that $17\leqslant n+1\leqslant25$. 
The same was performed for the function $X^3+Tr_{n+1}(X^9)$ defined over $\mathbb{F}_2^{n+1}$,  restricted to the $n$-dimensional subspace of the trace-zero elements, for every odd dimension $n+1$ such that $17\leqslant n+1\leqslant25$.
Thus, thanks to Theorem \ref{Theor n+1 odd taniguchi} we obtain the following extension of the family presented by Taniguchi in \cite{taniguchi}.
\begin{theorem}
Over $\FF{n+1}$
	 the Gold APN functions given by $F(X)=X^{2^i+1}$,  with $\gcd(i,n+1)=1$, and the APN function  $X^3+Tr_{n+1}(X^9)$ satisfy the D-property when restricted to the trace zero elements for every dimension $n+1=17s,19s,21s,23s,25s$ for any positive integer $s$.
\end{theorem}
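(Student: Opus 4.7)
The plan is to combine the computational simplification of Theorem \ref{simplification checking D property for quadratic} with the algebraic lifting of Theorem \ref{Theor n+1 odd taniguchi}. Both results in the statement concern quadratic APN functions whose univariate coefficients lie in the prime field $\f2$, hence automatically in every subfield $\FF k$ of $\FF{n+1}$. Thus to conclude the D-property for $\left.F\right|_{T_0(\FF{n+1})}$ in the dimensions $n+1=17s,19s,21s,23s,25s$, it suffices by Theorem \ref{Theor n+1 odd taniguchi} to establish the D-property for $\left.F\right|_{T_0(\FF k)}$ in the five base dimensions $k=17,19,21,23,25$.

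For each of these five base cases, my approach is a finite computer search. A direct check of Proposition \ref{Prop1 Taniguchi} would require enumerating all pairs $(\alpha,\beta)$ in $T_0(\FF k)\times T_0(\FF k)$ and thus scales as $2^{2(k-1)}$, which at $k=25$ is on the order of $2^{48}$ and out of reach. Here is where Theorem \ref{simplification checking D property for quadratic} becomes essential: since $F$ restricted to $T_0(\FF k)$ is quadratic with codomain $\FF k$ (so $n\le m$ holds with equality), we may fix any codimension-one subspace $K\subset T_0(\FF k)$ and only need to verify
\[
\{D^2_{\alpha,\beta}F(0):\alpha\in K,\ \beta\in T_0(\FF k)\}=\FF k,
\]
reducing the search to $2^{2k-3}$ pairs, i.e.\ a factor of $4$ gain. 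Combined with straightforward short-circuit bookkeeping (stop as soon as all $2^k$ target values have been hit) this is well within feasible computation for $k\le 25$.

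Concretely, for each $k\in\{17,19,21,23,25\}$ and each $i$ with $\gcd(i,k)=1$, I would fix a primitive element of $\FF k$, tabulate $T_0(\FF k)=\ker(Tr_k)$, pick a hyperplane $K$ of $T_0(\FF k)$, and iterate over $\alpha\in K$, $\beta\in T_0(\FF k)$, recording the set $\{F(\alpha)+F(\beta)+F(\alpha+\beta)\}$ (the second order derivative at the origin is constant for quadratic $F$, and equals this expression). The same loop is run for $G(X)=X^3+Tr_{n+1}(X^9)$ with $n+1=k$. If the recorded set equals $\FF k$, the D-property holds on the base dimension, and Theorem \ref{Theor n+1 odd taniguchi} lifts it to every multiple dimension $n+1=ks$.

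The main obstacle is not mathematical but computational: the case $k=25$ involves roughly $2^{47}$ pair evaluations per function per exponent $i$, and many values of $i$ have to be treated (one for each coset of $\{i,-i\}$ coprime to $k$). The simplification of Theorem \ref{simplification checking D property for quadratic} is what makes the search feasible at all; further practical speedups (early termination once $\FF k$ has been covered, using CCZ/EA-equivalence to reduce the set of representative Gold exponents, and exploiting the bilinearity $D^2_{\alpha,\beta}F(0)=\alpha\beta^{2^i}+\alpha^{2^i}\beta$ for $F(X)=X^{2^i+1}$) make each case complete within reasonable time. Once the five base verifications succeed, Theorem \ref{Theor n+1 odd taniguchi} applied field-by-field yields the stated conclusion for all $s\ge 1$.
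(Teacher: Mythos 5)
Your proposal follows essentially the same route as the paper: verify the five base cases $n+1\in\{17,19,21,23,25\}$ by a computer search made feasible through Theorem \ref{simplification checking D property for quadratic}, then lift the D-property to every multiple dimension $n+1=ks$ via Theorem \ref{Theor n+1 odd taniguchi}, using that all coefficients lie in $\f2$ and hence in the relevant subfield. One tiny quibble: restricting $\alpha$ to a hyperplane $K$ of the $(k-1)$-dimensional space $T_0(\FF k)$ reduces the pair count from $2^{2k-2}$ to $2^{2k-3}$, i.e.\ a factor of $2$ rather than $4$, which does not affect the argument.
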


\subsection{Characterization from the ANF of Quadratic Functions}
We can derive an alternative characterization of the Dillon's property for quadratic functions by means of their ANF representation.

\begin{theorem}\label{thm8}
	Consider a quadratic $(n,m)$-function $F$  
	\begin{equation*}
		F(x)=\sum_{i=1}^{n-1}\sum_{j=i+1}^na_{i,j}x_ix_j+\sum_{k=1}^na_kx_k+a_0,  
	\end{equation*}
	with $a_{i,j},a_k,a_0\in\ff m$.
 Then $F$ satisfies the D-property if and only if
	\begin{equation*}
		\bigcup_{J\subseteq\{1,\ldots,n-1\}}\left\langle \sum_{j\in J\setminus\{i\}}a_{i,j}:i=1,\ldots,n\right\rangle=\mathbb{F}_2^m,
	\end{equation*}
    where $\langle S\rangle$ denotes the subspace of the $\f2 $-linear spanned by the vectors in $S$. 
 
\end{theorem}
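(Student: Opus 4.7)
The plan is to combine an explicit ANF-level computation of the second order derivative with the hyperplane simplification provided by Theorem \ref{simplification checking D property for quadratic}. First I would compute $D^2_{\alpha,\beta}F(0)$ directly from the given ANF. Since the second derivative annihilates every monomial of degree at most one, the linear and constant terms of $F$ drop out, and a short calculation on the quadratic cross terms gives
$$D^2_{\alpha,\beta}F(0)=\sum_{i<j}a_{i,j}(\alpha_i\beta_j+\alpha_j\beta_i).$$
Adopting the symmetric convention $a_{j,i}:=a_{i,j}$ for $j>i$ and $a_{i,i}:=0$, this rewrites as
$$D^2_{\alpha,\beta}F(0)=\sum_{i=1}^{n}\alpha_i\, B_i(\beta),\qquad B_i(\beta):=\sum_{j\ne i}a_{i,j}\beta_j.$$

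Next I would apply Theorem \ref{simplification checking D property for quadratic} with $K$ taken to be the coordinate hyperplane $\{\beta\in\ff n:\beta_n=0\}$. This reduces the D-property to the equality $\{D^2_{\alpha,\beta}F(0):\alpha\in\ff n,\ \beta\in K\}=\ff m$. Vectors $\beta\in K$ are in bijection with their supports $J\subseteq\{1,\dots,n-1\}$, and for such a $\beta$ one obtains $B_i(\beta)=\sum_{j\in J\setminus\{i\}}a_{i,j}$ directly from the formula above.

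Finally, for $J$ fixed, as $\alpha$ ranges freely over $\ff n$ the expression $\sum_i\alpha_i B_i(\beta)$ traces out exactly the $\f2$-linear span $\langle B_1(\beta),\dots,B_n(\beta)\rangle$; taking the union over all $J\subseteq\{1,\dots,n-1\}$ then yields the claimed characterization. I do not expect a substantial obstacle: the argument is essentially a coordinate rewriting of the bilinear form $D^2_{\alpha,\beta}F(0)$ combined with the earlier hyperplane simplification. The only minor subtlety is purely notational, namely reading $a_{n,j}$ (for $j\in J\subseteq\{1,\dots,n-1\}$) as $a_{j,n}$, i.e.\ the actual ANF coefficient, in line with the symmetric convention introduced in the first step.
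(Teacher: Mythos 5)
Your proposal is correct and follows essentially the same route as the paper: compute the bilinear form $D^2_{\alpha,\beta}F(0)=\sum_{i\ne j}\alpha_i\beta_j a_{i,j}$, invoke Theorem \ref{simplification checking D property for quadratic} with the coordinate hyperplane, identify the restricted direction with its support $J\subseteq\{1,\ldots,n-1\}$, and let the free direction sweep out the span $\langle\sum_{j\in J\setminus\{i\}}a_{i,j}: i=1,\ldots,n\rangle$. The only cosmetic difference is that you restrict $\beta$ to the hyperplane while the paper restricts $\alpha$ and expands along the canonical basis in the other slot, which is equivalent by the symmetry of the second-order derivative.
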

\begin{proof} Given a vector $\alpha\in\ff k$, with $\alpha^{(\ell)}$ we indicate its $\ell$-th coordinate. 
Given $F$ as in the hypothesis and $\alpha,\beta\in\mathbb{F}_2^n$, we have
\begin{align}
		 \label{second derivative quad}
		&D^2_{\alpha,\beta}F(x)=\sum_{i<j}\left(\alpha^{(i)}\beta^{(j)}+\alpha^{(j)}\beta^{(i)}\right)a_{i,j}=\nonumber\\
		&=\sum_{i <j}\alpha^{(i)}\beta^{(j)}a_{i,j}+\sum_{i >j}\alpha^{(i)}\beta^{(j)}a_{i,j}
		=\sum_{i\ne j}\alpha^{(i)}\beta^{(j)}a_{i,j},
	\end{align}
 where  we use the convention $a_{i,j}=a_{j,i}$.

	We know that the map
	\begin{align*}
		D^2F:&\mathbb{F}_2^n\times\mathbb{F}_2^n\longrightarrow\mathbb{F}_2^m\\
		&(\alpha,\beta)\longmapsto D^2_{\alpha,\beta}F(0)
	\end{align*}
	is $\mathbb{F}_2$-bilinear. Therefore its image is given by the following union
	\begin{equation*}
		\bigcup_{\alpha\in\mathbb{F}_2^n}\left\langle D^2_{\alpha,e_i}F(x):e_i\in\mathcal{B}\right\rangle
	\end{equation*}
	where $\mathcal{B}$ is an $\mathbb{F}_2$-basis of $\mathbb{F}_2^n$. 
	Hence $F$ is a D-function if and only if this union of subspaces covers all $\mathbb{F}_2^m$.\\
	Thanks to Theorem \ref{simplification checking D property for quadratic} we can actually say that $F$ is a D-function if and only if
	\begin{equation*}
		\bigcup_{\alpha\in K}\left\langle D^2_{\alpha,e_i}F(x):e_i\in\mathcal{B}\right\rangle=\mathbb{F}_2^m
	\end{equation*}	
	for some $(n-1)$-dimensional vector subspace $K$ of $\mathbb{F}_2^n$.
	If we choose $\mathcal{B}$ to be the canonical basis given by $e_i^{(j)}=\delta_{i,j}$ (Kronecker delta) and $K$ to be the space generated by $\{e_1,\ldots,e_{n-1}\}$, we obtain that $F$ is a D-function if and only if
	\begin{equation*}
		\bigcup_{J\subseteq\{1,\ldots,n-1\}}\left\langle D^2_{\sum_{j\in J}e_j,e_i}F(x):e_i\in\mathcal{B}\right\rangle=\bigcup_{J\subseteq\{1,\ldots,n-1\}}\left\langle \sum_{j\in J}D^2_{e_i,e_j}F(x):e_i\in\mathcal{B}\right\rangle=\mathbb{F}_2^m.
	\end{equation*}		
	Using the expression (\ref{second derivative quad}) we have
	\begin{equation*}
		D^2_{e_i,e_j}F(x)=\sum_{\mu\ne\nu}e_i^{(\mu)}e_j^{(\nu)}a_{\mu,\nu}=\sum_{\mu\ne\nu}\delta_i^{(\mu)}\delta_j^{(\nu)}a_{\mu,\nu}=\begin{cases}
		     a_{i,j} &\mbox{if }i\ne j,\\
            0 &\mbox{otherwise}.
		\end{cases}	\end{equation*}
	Hence the thesis follows.
\end{proof}

As corollary of the above result we can derive a characterization of the APN property for quadratic functions in terms of their ANF. 
\begin{defi}
	Let $n$ be a positive integer. We introduce the following notations:
	\begin{align*}
 [n]^2=\{(i,j) :  i,j\in\{1,\dots,n\}\},&&
		\Delta_{n}=\{(i,i):i\in \{1,\dots,n\}\},&&
		\overline\Delta_n=[n]^2\setminus\Delta_{n}.
	\end{align*}
	We say that a non-empty subset $I\subset\overline\Delta_n$ is ultra-transitive if the following property is satisfied:
 \begin{center}
     for any $i,j,k,\ell\in[n]$ such that $(i,j),(\ell,k)\in I$ and $i\ne k$ then we have $(i,k)\in I$.
 \end{center}
\end{defi}
\begin{remark}
An ultra-transitive subset $I$ of $\overline\Delta_n$ can be visualized as a subset of points in $\overline\Delta_n$, such that it contains all the vertices of any rectangle we can construct from the points of $I$, unless the point lays in the diagonal $\Delta_{n}$ (which  it is not contained in the set $\overline\Delta_n$).
\end{remark}

We will use ultra-transitive sets as basis for evaluating second-order derivatives. In particular we want to associate to every pair of distinct nonzero directions $\alpha,\beta$ in $\mathbb{F}_2^n$ an ultra-transitive subset $I$ such that
\begin{equation*}
	D^2_{\alpha,\beta}F(0)=\sum_{(i,j)\in I}a_{i,j}.
\end{equation*}
To make this precise we need to take into account the symmetry of  the second-order derivative with respect to the directions.
This symmetry allows to remove some points from ultra-transitive sets. We can drop the points of $I$ which are symmetric with respect to the diagonal $\Delta_{n}$ as their mutual contribution in the sum $\sum_{(i,j)\in I}a_{i,j}$ would vanish (we adopted the convention $a_{i,j}=a_{j,i}$). This is described by the following reduction map:
\begin{align*}
	&UT(n)\longrightarrow \mathcal{P}\left(\overline\Delta_n\right)\\
	&\hspace{2pc}I\longmapsto\tilde{I}
\end{align*}
where $UT(n)$ denotes the family of ultra-transitive subsets of $\overline\Delta_n$ and $\tilde{I}$ is the subset $I$ to which we remove the pairs of the family
\begin{equation*}
	\{\{(i,j),(j,i)\}\}_{i\ne j}
\end{equation*}
which are fully contained in $I$.

In Figure \ref{fig1} we depict an example of 
 ultra-transitive set $I$ and the resulting set $\tilde{I}$.

\begin{figure}[h]  
\centering 
  \begin{subfigure}[b]{0.4\linewidth}\centering
    \begin{tikzpicture}[scale=0.5]   
     \draw[step=1cm,gray,very thin] (0,0) grid (6,6);
		\foreach \x in {1,2,3,4,5,6}
		\draw (\x cm,1pt) -- (\x cm,-1pt) node[anchor=north] {$\x$};
		\foreach \y in {1,2,3,4,5,6}
		\draw (1pt,\y cm) -- (-1pt,\y cm) node[anchor=east] {$\y$};
		\draw[thick,color=red](0,0) -- (6,6);
		\node[color=red] at (7,6.5) {$\Delta_{6}$};
		\draw[black, very thick] (1,2) rectangle (4,4);
		\draw[black, very thick] (1,2) rectangle (6,4);
		\draw[black, very thick] (1,2) rectangle (4,6);
		\draw[black, very thick] (1,2) rectangle (6,6);
		\node[color=blue] at (6.8,4) {$I$};
		\node at (1,2) [circle,color=blue,,fill,inner sep=0.1pt]{A};
		\node at (1,4) [circle,color=blue,,fill,inner sep=0.1pt]{B};
		\node at (4,2) [circle,color=blue,,fill,inner sep=0.1pt]{C};
		\node at (4,6) [circle,color=blue,,fill,inner sep=0.1pt]{D};
		\node at (1,6) [circle,color=blue,,fill,inner sep=0.1pt]{E};
		\node at (6,2) [circle,color=blue,,fill,inner sep=0.1pt]{F};
		\node at (6,4) [circle,color=blue,,fill,inner sep=0.1pt]{G};
		\node at (4,4) [circle,color=red,,fill,inner sep=0.1pt]{H};
		\node at (6,6) [circle,color=red,,fill,inner sep=0.1pt]{I};
    \end{tikzpicture}%
    \caption{An ultra-transitive subset $I$.} \label{ex.ultra-transitive}  
  \end{subfigure}
\begin{subfigure}[b]{0.4\linewidth}
  \begin{tikzpicture}[scale=0.5]  
      \draw[step=1cm,gray,very thin] (0,0) grid (6,6);
		\foreach \x in {1,2,3,4,5,6}
		\draw (\x cm,1pt) -- (\x cm,-1pt) node[anchor=north] {$\x$};
		\foreach \y in {1,2,3,4,5,6}
		\draw (1pt,\y cm) -- (-1pt,\y cm) node[anchor=east] {$\y$};
		\draw[thick,color=red](0,0) -- (6,6);
		\node[color=red] at (7,6.5) {$\Delta_{6}$};
		\draw[black, very thick] (1,2) rectangle (4,4);
		\draw[black, very thick] (1,2) rectangle (6,4);
		\draw[black, very thick] (1,2) rectangle (4,6);
		\draw[black, very thick] (1,2) rectangle (6,6);
		\node[color=blue] at (6.8,4) {$\tilde{I}$};
		\node at (1,2) [circle,color=blue,,fill,inner sep=0.1pt]{A};
		\node at (1,4) [circle,color=blue,,fill,inner sep=0.1pt]{B};
		\node at (4,2) [circle,color=blue,,fill,inner sep=0.1pt]{C};
		\node at (1,6) [circle,color=blue,,fill,inner sep=0.1pt]{E};
		\node at (6,2) [circle,color=blue,,fill,inner sep=0.1pt]{F};
    	\node at (4,4) [circle,color=red,,fill,inner sep=0.1pt]{H};
		\node at (6,6) [circle,color=red,,fill,inner sep=0.1pt]{I};
\end{tikzpicture}
\caption{The reduced set $\tilde{I}$.} \label{fig2}  
\end{subfigure}
\caption{An example for $n=6$, the red dots are not part of $I$ and $\tilde I$.}\label{fig1}
\end{figure}
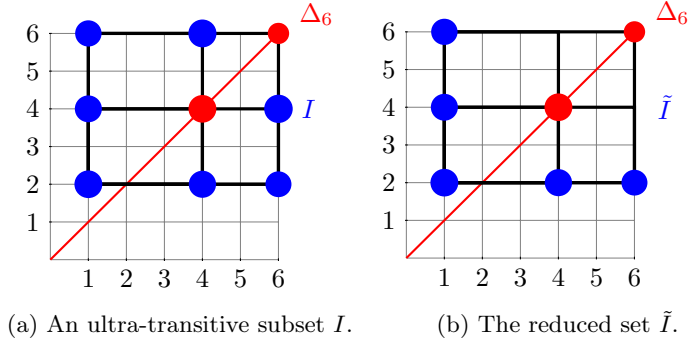

We can finally state and prove the following corollary which is a generalization of  \cite[Theorem 10]{Beth-Ding}. 
\begin{coro}
Consider a quadratic $(n,m)$-function as in Theorem \ref{thm8}.
	Then $F$ is APN if and only if for every ultra-transitive subset $I\subset\overline\Delta_n$ such that the reduced set $\tilde{I}$ is not empty, we have
	\begin{equation*}
		\sum_{(i,j)\in \tilde{I}}a_{i,j}\ne 0.
	\end{equation*}
\end{coro}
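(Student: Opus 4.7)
The plan is to unpack the APN condition for quadratic $F$ into a statement indexed by pairs of supports of directions, and then identify these indexing pairs with the ultra-transitive subsets of $\overline\Delta_n$ (using the reduction $I\mapsto\tilde I$ to account for symmetry). First, since for a quadratic $F$ the second-order derivative $D^2_{\alpha,\beta}F$ is constant in the base point, the APN property is equivalent to $D^2_{\alpha,\beta}F(0)\ne 0$ for every pair of distinct non-zero $\alpha,\beta\in\mathbb{F}_2^n$. Setting $A=\{i:\alpha^{(i)}=1\}$ and $B=\{j:\beta^{(j)}=1\}$, formula \eqref{second derivative quad} rewrites this as
\begin{equation*}
D^2_{\alpha,\beta}F(0)=\sum_{(i,j)\in I_{A,B}}a_{i,j},\qquad I_{A,B}:=(A\times B)\cap\overline\Delta_n.
\end{equation*}

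Second, I would prove that the sets $I_{A,B}$, as $(A,B)$ ranges over ordered pairs of non-empty subsets of $\{1,\ldots,n\}$, coincide with the ultra-transitive subsets of $\overline\Delta_n$. Each $I_{A,B}$ is clearly ultra-transitive. Conversely, given an ultra-transitive $I$, let $A$ and $B$ be its first and second coordinate projections. For any $(i,k)\in I_{A,B}$, the definition of $A$ produces some $(i,j)\in I$ and that of $B$ some $(\ell,k)\in I$; since $i\ne k$, ultra-transitivity yields $(i,k)\in I$, giving $I_{A,B}\subseteq I$, while the reverse inclusion is trivial.

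Third, the convention $a_{i,j}=a_{j,i}$ implies that each symmetric pair $\{(i,j),(j,i)\}\subseteq I$ contributes $2 a_{i,j}=0$, so $\sum_{(i,j)\in I}a_{i,j}=\sum_{(i,j)\in\tilde I}a_{i,j}$. Moreover $\tilde I_{A,B}=\emptyset$ iff $I_{A,B}$ is symmetric iff $A=B$: if $A\ne B$, by symmetry assume $A\setminus B\ne\emptyset$; then for any $j\in A\setminus B$ and $k\in B$ one has $(j,k)\in I_{A,B}$ but $(k,j)\notin I_{A,B}$, so $(j,k)\in\tilde I_{A,B}$. Combining, $F$ is APN iff for every pair of non-empty subsets $A\ne B$ we have $\sum_{\tilde I_{A,B}}a_{i,j}\ne 0$, which by the correspondence above is equivalent to $\sum_{\tilde I}a_{i,j}\ne 0$ for every ultra-transitive $I$ with $\tilde I\ne\emptyset$. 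The main technical point is the identification of ultra-transitive sets with ``support-rectangles''; this is the only step requiring care, but it is a direct application of the definition via coordinate projections.
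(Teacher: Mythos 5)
Your proof is correct and follows essentially the same route as the paper's: both identify the second-order derivatives of a quadratic $F$ with sums of the $a_{i,j}$ over punctured support rectangles $(A\times B)\setminus\Delta_{n}$ and use the convention $a_{i,j}=a_{j,i}$ together with characteristic $2$ to pass from $I$ to $\tilde I$. Your explicit intermediate claim that the ultra-transitive subsets are exactly these punctured rectangles (recovered from their coordinate projections) packages in one correspondence what the paper establishes via two separate contrapositive arguments, but the underlying idea is the same.
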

\begin{proof}
    We start proving that if $F$ is APN then for every ultra-transitive subset $I$ of $\overline\Delta_n$ the corresponding sum $\sum_{(i,j)\in \tilde{I}}a_{i,j}$ is nonzero. By contrapositive we suppose that there exists an ultra-transitive subset $I$ with $\tilde{I}\ne\emptyset$ and such that $\sum_{(i,j)\in \tilde{I}}a_{i,j}=0$, and we show that $F$ cannot be APN. So, let $I\subset \overline\Delta_n$ and $\tilde{I}$ be as described.
For $k=1,2$, let $\pi_k:\tilde{I}\rightarrow \{1,\ldots,n\}$ be the projection on the $k$-th component and, let $\alpha_k\in\mathbb{F}_2^n$ be the vector whose Hamming support coincide with $\pi_k(\tilde{I})$. So that,
\begin{equation}
	\label{conditions for I in APN quadratic proof}
	\begin{aligned}
		\alpha_{1}^{(i)}&=1\Leftrightarrow (i,\star)\in \tilde{I},\\
		\alpha_{2}^{(j)}&=1\Leftrightarrow (\star,j)\in \tilde{I},
	\end{aligned}
\end{equation}
where  $\star$ can represent any index in $\{1,\ldots,n\}$.
 Then, thanks to the bilinearity of the second order derivative and  to the ultra-transitive property of $I$, we can conclude that
\begin{equation*}
	D^2_{\alpha_1,\alpha_2}F(0)=\sum_{i\ne j}\alpha_1^{(i)}\alpha_2^{(j)}a_{i,j}
 =\sum_{i\ne j:\substack{\alpha_1^{(i)}=1\\\alpha_2^{(j)}=1}}a_{i,j}=\sum_{i\ne j:\substack{(i,\star)\in \tilde{I}\\(\star,j)\in\tilde{I}}}a_{i,j}=\sum_{(i,j)\in \tilde{I}}a_{i,j}=0.
\end{equation*}
Hence $F$ is not APN.

Conversely let us prove that, if for every ultra-transitive subset $I$ of $\overline\Delta_n$ such that $\tilde{I}\ne\emptyset$ the sum $\sum_{(i,j)\in \tilde{I}}a_{i,j}\ne 0$, then $F$ is APN. Again by contrapositive we suppose that $F$ is not APN, and we show that there exists an ultra-transitive subset $I\subset\overline\Delta_n$ such that $\tilde{I}\ne\emptyset$ and $\sum_{(i,j)\in I}a_{i,j}=0$. Since $F$ is not APN there exists two distinct nonzero vectors $\alpha_1,\alpha_2\in\mathbb{F}_2^n$ such that
\begin{equation*}
	D^2_{\alpha_1,\alpha_2}F(0)=0.
\end{equation*}
If we let $I$ to be the Cartesian product of the Hamming supports of $\alpha_1$ and $\alpha_2$ respectively, to which we remove the diagonal $\Delta_{n}$, i.e.
\begin{equation*}
	I=\left(Supp(\alpha)\times Supp(\beta)\right)\setminus\Delta_{n}
\end{equation*}
which means that:
\begin{equation*}
	(i,j)\in I \Leftrightarrow\begin{cases}
		\alpha_1^{(i)}&=1\\
		\alpha_2^{(j)}&=1
	\end{cases}\hspace{1pc}i\ne j.
\end{equation*}
We clearly have that $I$ is ultra-transitive and furthermore
\begin{equation*}
	\sum_{(i,j)\in \tilde{I}}a_{i,j}=\sum_{(i,j)\in I}a_{i,j}=\sum_{i\ne j:\substack{\alpha_1^{(i)}=1\\\alpha_2^{(j)}=1}}a_{i,j}=\sum_{i\ne j}\alpha_1^{(i)}\alpha_2^{(j)}a_{i,j}=D^2_{\alpha_1,\alpha_2}F(0)=0.
\end{equation*}
This concludes the proof.
\end{proof}

\section{A note on the case of Plateaued Functions}
The Dillon's property can be studied also for functions which are not strictly APN.
Indeed, Proposition \ref{Prop3 Taniguchi} tells us that  interesting conclusions can be drawn from the fact that a function has the D-property regardless of being APN.
For this reason we dedicate this last section to an important class of $(n,m)$-functions, namely plateaued functions, and we determine conditions under which they satisfy the Dillon's property.
\begin{theorem}\label{thm:plateaued}
	Let $F$ be a plateaued $(n,m)$-function (not necessarily with single amplitude). Then $F$ is a D-function if and only if for any $w\in\ff m$ we have
	\begin{equation*}
		2^{2n}+\sum_{v\in\mathbb{F}_2^m\setminus\{0\}}\lambda_v^2(-1)^{v\cdot w}>0,
	\end{equation*}
	where $\lambda_v$ is the amplitude of the plateaued component $F_v(x)=v\cdot F(x)$.
	\end{theorem}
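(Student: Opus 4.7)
The plan is to reduce the statement directly to Theorem \ref{thm:4th moment}, which characterises the D-property in terms of the signed sum of fourth Walsh moments. Since $m\geq n$ is required there (and it is implicit here, e.g.\ via Nyberg's bound when the function is APN, but in fact the computation below does not depend on it), we start from the equivalence
\begin{equation*}
F \text{ is a D-function}\ \Longleftrightarrow\ \sum_{(u,v)\in\ff n\times\ff m}(-1)^{v\cdot b}\mathcal W_F^4(u,v)>0\quad\forall\, b\in\ff m\setminus\{0\}.
\end{equation*}
The idea is to evaluate the inner sum over $u$ for each fixed $v$, exploiting that the squared Walsh coefficients $\mathcal W_F^2(u,v)$ take at most two values when $v$ corresponds to a plateaued component.

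First I would split the sum according to whether $v=0$ or $v\ne 0$. For $v=0$ one has $\mathcal W_F(u,0)=2^n\delta_{u,0}$, so $\sum_{u}\mathcal W_F^4(u,0)=2^{4n}$. For $v\ne 0$, the plateaued assumption gives $\mathcal W_F^2(u,v)\in\{0,\lambda_v^2\}$; therefore $\mathcal W_F^4(u,v)=\lambda_v^2\cdot\mathcal W_F^2(u,v)$, and Parseval's identity $\sum_{u}\mathcal W_F^2(u,v)=2^{2n}$ gives
\begin{equation*}
\sum_{u\in\ff n}\mathcal W_F^4(u,v)=\lambda_v^2\sum_{u\in\ff n}\mathcal W_F^2(u,v)=2^{2n}\lambda_v^2.
\end{equation*}

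Substituting back, the fourth-moment sum becomes
\begin{equation*}
\sum_{(u,v)}(-1)^{v\cdot b}\mathcal W_F^4(u,v)=2^{4n}+2^{2n}\sum_{v\in\ff m\setminus\{0\}}\lambda_v^2(-1)^{v\cdot b},
\end{equation*}
and dividing by the positive factor $2^{2n}$ yields precisely the inequality in the statement with $w=b$. Hence the D-property is equivalent to
\begin{equation*}
2^{2n}+\sum_{v\in\ff m\setminus\{0\}}\lambda_v^2(-1)^{v\cdot w}>0\quad\text{for every }w\in\ff m\setminus\{0\}.
\end{equation*}
Finally, for $w=0$ the left-hand side equals $2^{2n}+\sum_{v\ne 0}\lambda_v^2\geq 2^{2n}>0$, so the condition holds trivially; this is why one may state it ``for any $w\in\ff m$'' without change. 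No single step is really delicate here; the only point that requires a bit of care is the bookkeeping between the $v=0$ contribution (which produces the $2^{4n}$ term and hence the constant $2^{2n}$ after normalisation) and the Parseval-based collapse of $\mathcal W_F^4$ to $\lambda_v^2\mathcal W_F^2$ on the non-trivial components.
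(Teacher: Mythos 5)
Your proof is correct, but it follows a genuinely different route from the paper's. You reduce the statement to Theorem \ref{thm:4th moment} and then collapse the fourth moment componentwise: for $v\ne 0$ plateauedness gives $\mathcal W_F^4(u,v)=\lambda_v^2\,\mathcal W_F^2(u,v)$, Parseval gives $\sum_u\mathcal W_F^2(u,v)=2^{2n}$, and the $v=0$ term contributes $2^{4n}$; dividing by $2^{2n}$ yields exactly the stated inequality, with the $w=0$ case trivially satisfied. All of these steps check out, and your parenthetical observation that the proof of Theorem \ref{thm:4th moment} never actually uses $m\ge n$ is accurate, so the argument covers the theorem as stated (which imposes no condition on $m$). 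The paper instead introduces the pseudo-Boolean function $\Lambda(v)=\sum_{a,b}(-1)^{D^2_{a,b}F_v(x)}$, invokes the result of Carlet--Prouff to identify $\Lambda(v)$ with $\lambda_v^2$ (and $2^{2n}$ at $v=0$), and computes its Fourier--Hadamard transform to show that $\bigl|\{(a,b):D^2_{a,b}F(x)=w\}\bigr|=2^{-m}\bigl(2^{2n}+\sum_{v\ne 0}\lambda_v^2(-1)^{v\cdot w}\bigr)$ \emph{for every fixed} $x$. Your derivation is shorter and self-contained within the paper (no external citation needed), but the paper's version buys something extra: the $x$-independence of the solution count, which is precisely what is used immediately afterwards to prove the corollary that a plateaued $F$ with $F(0)=0$ is a D-function if and only if $\{F(x)+F(y)+F(x+y)\}=\ff m$, and hence the CCZ-invariance statement. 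If you wanted your argument to support those consequences as well, you would still need to supply that uniformity-in-$x$ fact separately.
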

\begin{proof}
   First, we define the pseudo-Boolean function
\begin{align*}
	\Lambda:&\mathbb{F}_2^m\longrightarrow\mathbb{R}\\
	&v\longmapsto\sum_{a,b\in\mathbb{F}_2^n}(-1)^{D^2_{a,b}F_v(x)}.
\end{align*}
This pseudo-Boolean function is well defined and independent on $x$ because $F$ is plateaued, see \cite{CP}. Furthermore, as we can deduce from \cite{CP}, we can express it as
\begin{equation}
	\label{Lambda function plateaued}
	\Lambda(v)=\begin{cases}
		\lambda_v^2&\text{ if }v\ne 0\\
		2^{2n}&\text{ if }v=0.
	\end{cases}
\end{equation}
Its Fourier-Hadamard transform gives
	\begin{align*}
	\mathcal{F}(\Lambda)(w)&=\sum_{v\in\mathbb{F}_2^m}\left(\sum_{a,b\in\mathbb{F}_2^n}(-1)^{D^2_{a,b}F_v(x)}\right)(-1)^{v\cdot w}=\sum_{v\in\mathbb{F}_2^m}\left(\sum_{a,b\in\mathbb{F}_2^n}(-1)^{v\cdot(D^2_{a,b}F(x)+w)}\right)\\
	&=2^m\cdot\left|\left\{(a,b)\in\mathbb{F}_2^n\times\mathbb{F}_2^n:D^2_{a,b}F(x)=w\right\}\right|.
\end{align*}
But from the expression \eqref{Lambda function plateaued} we can compute
\begin{equation*}
	\mathcal{F}(\Lambda)(w)=2^{2n}+\sum_{v\in\mathbb{F}_2^m\setminus\{0\}}\lambda_v^2(-1)^{v\cdot w}.
\end{equation*}
Hence
\begin{equation*}
	 \left|\left\{(a,b)\in\mathbb{F}_2^n\times\mathbb{F}_2^n:D^2_{a,b}F(x)=w\right\}\right|=\frac{1}{2^m}\left(2^{2n}+\sum_{v\in\mathbb{F}_2^m\setminus\{0\}}\lambda_v^2(-1)^{v\cdot w}\right).
\end{equation*}
Now the thesis easily follows from observing that $F$ is a D-function if and only if 
\begin{equation*}
	 \left|\left\{(a,b)\in\mathbb{F}_2^n\times\mathbb{F}_2^n:D^2_{a,b}F(x)=w\right\}\right|>0\hspace{1pc}\forall w\in\mathbb{F}_2^m.
\end{equation*}
\end{proof}

From the proof of Theorem \ref{thm:plateaued}, we have that the cardinality $|\{(a,b)\in\mathbb{F}_2^n\times\mathbb{F}_2^n:D^2_{a,b}F(x)=w\}|$ is independent from the element $x$. That is, if $|\{(a,b)\in\mathbb{F}_2^n\times\mathbb{F}_2^n:D^2_{a,b}F(x)=w\}|>0$ for some $x\in\ff n$, then $|\{(a,b)\in\mathbb{F}_2^n\times\mathbb{F}_2^n:D^2_{a,b}F(y)=w\}|>0$ for any $y\in\ff n$, and in particular for $y=0$. Thus, we obtain the following result.

\begin{coro}
    Let $F$ be a plateaued $(n,m)$-function (not necessarily with single amplitude), $F(0)=0$. Then $F$ is a D-function if and only if $\{ F(x)+F(y)+F(x+y): (x,y)\in\mathbb{F}_2^n\times\mathbb{F}_2^n\}=\ff m$.
\end{coro}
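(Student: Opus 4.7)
The plan is to read off the corollary directly from the proof of Theorem \ref{thm:plateaued}. That proof establishes the closed-form identity
\begin{equation*}
\left|\{(a,b)\in\ff n\times\ff n:D^2_{a,b}F(x)=w\}\right|=\frac{1}{2^m}\left(2^{2n}+\sum_{v\in\ff m\setminus\{0\}}\lambda_v^2(-1)^{v\cdot w}\right),
\end{equation*}
and the right-hand side does not depend on the point $x$. First I would isolate this independence as the main observation: for a plateaued $F$, the cardinality of the pre-image of $w$ under the map $(a,b)\mapsto D^2_{a,b}F(x)$ is the same integer for every choice of base-point $x$.

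Next I would translate this independence into the statement we want. By definition, $F$ is a D-function if and only if for every $w\in\ff m$ there exists some triple $(a,b,x)$ with $D^2_{a,b}F(x)=w$, i.e.\ the above cardinality is strictly positive for every $w$. Since the cardinality does not depend on $x$, positivity for some $x$ is equivalent to positivity at $x=0$. Hence $F$ is a D-function if and only if $\{D^2_{a,b}F(0):a,b\in\ff n\}=\ff m$.

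Finally, under the normalization $F(0)=0$ the second order derivative at the origin simplifies, via
\begin{equation*}
D^2_{a,b}F(0)=F(0)+F(a)+F(b)+F(a+b)=F(a)+F(b)+F(a+b),
\end{equation*}
so $\{D^2_{a,b}F(0):a,b\in\ff n\}=\{F(a)+F(b)+F(a+b):(a,b)\in\ff n\times\ff n\}$. Combining the two equivalences gives the claim. There is essentially no obstacle here: the content is entirely in the $x$-independence extracted from the proof of Theorem \ref{thm:plateaued}, and the only thing left is a one-line bookkeeping step using $F(0)=0$.
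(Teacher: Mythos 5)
Your argument is correct and is exactly the paper's own: the paper derives the corollary from the $x$-independence of $\left|\{(a,b):D^2_{a,b}F(x)=w\}\right|$ established in the proof of Theorem \ref{thm:plateaued}, specializes to $x=0$, and uses $F(0)=0$ to rewrite $D^2_{a,b}F(0)$ as $F(a)+F(b)+F(a+b)$. Nothing is missing.
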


We recall  a known result on CCZ-equivalence of D-functions.

\begin{prop}[\cite{taniguchi}]
    Let $F,G : \ff n \to\ff m$ be CCZ-equivalent functions. If $F$ satisfies
$\{ F(x)+F(y)+F(x+y): (x,y)\in\mathbb{F}_2^n\times\mathbb{F}_2^n\}=\ff m$, then $G$ satisfies the D-property.
\end{prop}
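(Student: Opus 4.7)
The plan is to prove a statement strictly stronger than needed: that the fourfold sumset $\{q_1+q_2+q_3+q_4 : q_i\in \Gamma_F\}$ of graph points of $F$ is the entire ambient space $\ff n\times \ff m$. Once this is established, the D-property of $G$ will transfer via the CCZ-equivalence essentially for free, because in characteristic two the additive constant of an affine map cancels in every fourfold sum.

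As a preliminary, I would rephrase the D-property graph-theoretically: observing that $\sum_{i=1}^{4}(u_i,G(u_i))=(0,c)$ is equivalent to $u_4=u_1+u_2+u_3$ together with $G(u_1)+G(u_2)+G(u_3)+G(u_1+u_2+u_3)=c$, the D-property of $G$ is precisely the inclusion $\{0\}\times \ff m\subseteq\{p_1+p_2+p_3+p_4 : p_i\in\Gamma_G\}$. My aim is then to prove that the right-hand set is actually all of $\ff n\times \ff m$.

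The key step is to compute the fourfold sumset of $\Gamma_F$ explicitly. Given any target $(A,B)\in \ff n\times \ff m$, the hypothesis yields $x,y\in \ff n$ with $F(x)+F(y)+F(x+y)=B+F(A)$; choosing $a_1=x$, $a_2=y$, $a_3=x+y$, $a_4=A$ one checks directly
\begin{equation*}
\sum_{i=1}^{4}a_i=x+y+(x+y)+A=A,\qquad \sum_{i=1}^{4}F(a_i)=\bigl[F(x)+F(y)+F(x+y)\bigr]+F(A)=B,
\end{equation*}
so $(A,B)$ is a fourfold sum of points of $\Gamma_F$, and the sumset is indeed all of $\ff n\times \ff m$. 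With that in hand I would finish by writing the CCZ affine permutation as $\mathcal A=L+(c_1,c_2)$, with $L$ an invertible linear endomorphism of $\ff n\times\ff m$; since $\sum_{i=1}^{4}(L(q_i)+(c_1,c_2))=L(\sum_{i=1}^{4}q_i)$ because $4(c_1,c_2)=0$ in characteristic two, the fourfold sumset of $\Gamma_G=\mathcal A(\Gamma_F)$ equals $L(\ff n\times\ff m)=\ff n\times \ff m\supseteq\{0\}\times \ff m$, which yields the D-property of $G$.

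The main obstacle is devising the right construction in the key step. A natural first attempt is $a_4=0$, but that only recovers the slice $\{0\}\times \ff m$ of the sumset, which then forces a separate analysis of how $L$ moves this affine subspace around (this does work out using invertibility of $L$, but is fiddlier). Reserving $a_4=A$ as a free slot to absorb the offset $F(A)$ is the trick that covers all of $\ff n\times \ff m$ at once and renders the CCZ transfer transparent.
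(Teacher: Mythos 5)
Your argument is correct: the ``free slot'' choice $a_4=A$ does show that the fourfold sumset of $\Gamma_F$ is all of $\ff n\times\ff m$, the constant of the affine permutation cancels in a fourfold sum in characteristic two, and the D-property of $G$ is exactly the inclusion $\{0\}\times\ff m$ in the fourfold sumset of $\Gamma_G$. The paper states this result without proof (it is quoted from Taniguchi), and your graph--sumset argument is precisely the standard one underlying that citation, so there is nothing to flag.
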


Therefore, as for the case of quadratic functions we have that the D-property is CCZ-invariant for plateaued functions.

\begin{coro}
Let $F,G : \ff n \to\ff m$ be CCZ-equivalent functions. If $F$ is a plataued function that satisfies the D-property, then $G$ satisfies the D-property.
\end{coro}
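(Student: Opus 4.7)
The plan is to chain together the two preceding results, namely the characterization of the D-property for plateaued functions via the symmetric sum condition and the CCZ-invariance of that sum condition. The corollary reduces, in essence, to an application of these two ingredients in sequence, so the main task is to set up the hypothesis of the first one (normalization) and to verify that no additional information about $G$ is needed.

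First I would reduce to the normalized case. The D-property and plateauedness are both invariant under translation of the image (and more generally under EA-equivalence, by Proposition \ref{prop: EA invariance} and standard arguments), so I can replace $F$ with $F' = F + F(0)$, which satisfies $F'(0) = 0$, is still plateaued with the same amplitudes $\lambda_v$, and still satisfies the D-property. Since $G$ is CCZ-equivalent to $F$, it is also CCZ-equivalent to $F'$ (CCZ-equivalence contains EA-equivalence). Thus without loss of generality I assume $F(0) = 0$ from the start.

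Next I invoke the corollary preceding the statement: since $F$ is plateaued, normalized, and a D-function, one has the image equality
\begin{equation*}
\bigl\{F(x)+F(y)+F(x+y) : (x,y)\in\ff n\times\ff n\bigr\}=\ff m.
\end{equation*}
This is precisely the hypothesis of the proposition (from \cite{taniguchi}) on CCZ-equivalence of D-functions. Applying that proposition to the pair $(F,G)$ then yields that $G$ satisfies the D-property, which is the desired conclusion.

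There is no real obstacle here, since both the characterization of the D-property for plateaued functions by the symmetric sum and the CCZ-invariance of that symmetric sum condition are already established. The only point to be careful about is the reduction step, specifically ensuring that the symmetric sum condition and plateauedness are preserved when normalizing $F$ so that the preceding corollary is applicable; this, however, is immediate since adding a constant to $F$ does not affect second-order derivatives, Walsh amplitudes, or the image of the map $(x,y)\mapsto F(x)+F(y)+F(x+y)$ up to a constant shift (and the image condition we need is vacuously preserved since we only care that the image be all of $\ff m$).
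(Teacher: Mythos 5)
Your proof is correct and follows essentially the same route as the paper: normalize $F$, apply the preceding corollary to obtain the symmetric-sum condition $\{F(x)+F(y)+F(x+y)\}=\ff m$, and then invoke Taniguchi's proposition on CCZ-equivalence to conclude that $G$ has the D-property. The paper leaves the normalization step implicit, so your attention to it is a harmless (and welcome) extra precaution.
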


\subsection{On Strongly Plateaued Functions}
We consider now strongly plateaued functions, a particular case of plateaued maps.
To introduce these functions, we need to recall the following definitions.
An $(n,1)$-function is \emph{bent} if any (nonzero) derivative is balanced or equivalently its nonlinearity equals $2^{n-1}-2^{\frac{n}{2}-1}$. Clearly, bent functions exist only for $n$ even.
Instead, an $(n,1)$-function is called \emph{partially-bent} if any derivative is either balanced or constant (other equivalent definitions can be found in \cite{carlet}).
Note that any quadratic $(n,1)$-function is partially-bent.

Finally, we say that an $(n,m)$-function $F$ is \emph{strongly plateaued} if 
 for any $v\in\ff m\setminus\{0\}$ the $(n,1)$-function $F_v$ is partially-bent.

 To proceed with the study of these functions, we introduce some useful notation.
The set of \emph{linear structures} of $F_v$ is  $V(v)=\{a\in\ff n : \deg(D_aF_v)=0\}$, which is an $\ff{}$-subspace of $\ff n$, and we denote its dimension with $\ell_v=\dim(V(v))$. 
Moreover, given $a\in\ff n$ we set $\Delta_a=\{ v\in\ff{m} : \deg(D_aF_v)=0\}$ the set of components for which $a$ is a linear structure.

We restrict then to $m=n+1$ with $n$ even, and we obtain the following result for strongly plateaued APN functions.
\begin{theorem}
\label{thm: strongly pl}
    For $n$ even, let $F'$ be a strongly plateaued APN $(n+1,n+1)$-function.
    Set $F=F'_{|T}$ its restriction to any hyperplane of $\ff{n+1}$, so $F$ is a strongly plateaued APN $(n,n+1)$-function.
    Then $F$ is a D-function if and only if for any $w\in\ff {n+1}$ we have 
    $$|\Omega_w|>\frac{2^n-1}{3},$$ where
    $\Omega_w=\{ v\in\ff{n+1}\setminus\{0\} : F_v\mbox{ not bent and } v\cdot w=0\}$.
 \end{theorem}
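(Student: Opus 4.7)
The plan is to specialize Theorem~\ref{thm:plateaued} to the present setting. By that result, $F$ satisfies the D-property if and only if
\begin{equation*}
2^{2n}+\sum_{v\in\ff{n+1}\setminus\{0\}}\lambda_v^2(-1)^{v\cdot w}>0 \quad\text{for every } w\in\ff{n+1}.
\end{equation*}
The case $w=0$ is automatic (all summands with $(-1)^{v\cdot 0}=1$ add to a positive quantity). Thus the content lies in the $w\ne 0$ case, for which one needs to pin down the amplitudes $\lambda_v^2=2^{n+\ell_v}$.

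The main step is to establish that $\ell_v\in\{0,2\}$ for every nonzero $v$. I would first argue that $F'$ is almost bent. Since $F'$ is strongly plateaued, each $F'_v$ is partially-bent on $\ff{n+1}$; since $n+1$ is odd, none of these components can be bent, so $\ell'_v\ge 1$. The APN fourth-moment identity applied to $F'$ (obtained in the same way as in Section~\ref{Sec.3.3:Dillon's Property from the DDT}) yields $\sum_{v\ne 0}2^{\ell'_v}=2^{n+2}-2 = 2(2^{n+1}-1)$, which saturates the lower bound $\sum_{v\ne 0}2^{\ell'_v}\ge 2(2^{n+1}-1)$ and forces $\ell'_v=1$ for all $v\ne 0$. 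Consequently the linear-structure subspace $V'(v)$ of $F'_v$ is one-dimensional, and $F'_v=g\circ\pi+L$ with $\pi\colon\ff{n+1}\to\ff{n+1}/V'(v)$ the quotient map, $g$ a bent Boolean function on the (even-dimensional) quotient, and $L$ a linear form. Restricting to $T$ then splits into two cases: when $V'(v)\not\subset T$, the map $\pi|_T$ is a linear isomorphism, so $F_v$ is $g$ precomposed with an isomorphism plus an affine term, hence bent on $T$ and $\ell_v=0$; when $V'(v)\subset T$, a direct computation of $\mathcal W_{F_v}$ via the splitting $T=T_0\oplus V'(v)$ factors it as the product of a Walsh value of $g$ restricted to the hyperplane $T/V'(v)$ (which takes values in $\{0,\pm 2^{n/2}\}$) and a sum over $V'(v)$ (which is $0$ or $\pm 2$), giving $\lambda_v=2^{n/2+1}$, that is, $\ell_v=2$.

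Next I would count. The APN fourth-moment identity applied to $F$ gives $\sum_{v\ne 0}2^{\ell_v}=5\cdot 2^n-4$. Combined with $|B|+|\bar B|=2^{n+1}-1$ (where $B$ and $\bar B$ are the sets of bent and non-bent components) and $\ell_v\in\{0,2\}$, this yields $|\bar B|=2^n-1$ and $|B|=2^n$. Plugging $\lambda_v^2=2^n$ for $v\in B$ and $\lambda_v^2=2^{n+2}$ for $v\in\bar B$ into the D-property inequality and using $\sum_{v\ne 0}(-1)^{v\cdot w}=-1$ for $w\ne 0$ to eliminate the $B$-contribution reduces it to
\begin{equation*}
2^{2n}-2^n+3\cdot 2^n\sum_{v\in\bar B}(-1)^{v\cdot w}>0.
\end{equation*}
Finally, writing $\sum_{v\in\bar B}(-1)^{v\cdot w}=2|\Omega_w|-|\bar B|=2|\Omega_w|-(2^n-1)$ and simplifying produces exactly $|\Omega_w|>(2^n-1)/3$.

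The hardest part will be the rigidity claim $\ell_v\in\{0,2\}$. The fourth-moment identity applied to $F$ alone admits many allocations of $\ell_v$-values; the argument genuinely has to transfer the tight constraint $\ell'_v=1$ from $F'$ through the hyperplane restriction, so the two-case analysis of $\pi|_T$ according to whether $V'(v)$ lies in $T$ is essential. Once this structural input is secured, the rest of the proof is a short character-sum manipulation.
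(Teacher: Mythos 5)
Your proof is correct and reaches the paper's conclusion by the same overall skeleton: specialize Theorem~\ref{thm:plateaued}, show that every nonzero component of $F$ has $\ell_v\in\{0,2\}$ with exactly $2^n-1$ non-bent components, and finish with the same character-sum manipulation (your final expression $2^{n+1}(3|\Omega_w|-(2^n-1))$ is literally the one in the paper). Where you genuinely diverge is in how the rigidity claim is established. The paper imports it from Nyberg: the identity $\sum_{v\ne0}(2^{\ell_v}-1)=3(2^n-1)$ together with the evenness of $\ell_v$ gives $|\mathcal{NB}|\le 2^n-1$, and the reverse inequality comes from the fact that each component of $F'$ has exactly one nonzero linear structure and that these structures are \emph{distinct} for distinct components, so that $2^n-1$ of them land in $T$; equality then forces $\ell_v=2$ throughout $\mathcal{NB}$. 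You instead re-derive $\ell'_v=1$ for all components of $F'$ from the fourth-moment identity in odd dimension, and then analyze the hyperplane restriction of each partially-bent component directly via the decomposition $F'_v=g\circ\pi+L$, splitting on whether the linear-structure line $V'(v)$ lies in $T$; this yields $\ell_v\in\{0,2\}$ componentwise, after which the count $|\bar B|=2^n-1$ drops out of the moment identity for $F$ alone. A pleasant feature of your route is that it never uses the injectivity of $v\mapsto a_v$ (the distinctness of linear structures across components), which is the one nontrivial external input in the paper's argument; the price is the somewhat longer, but elementary, Walsh-transform computation for the case $V'(v)\subset T$. Both approaches are sound, and yours is the more self-contained of the two.
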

\begin{proof}
    From  \cite[Theorem 8]{Nyb95}, we have 
   \begin{equation}\label{eq:l_v}
  \sum_{v\in\ff {n+1}\setminus\{0\}}(2^{\ell_v}-1)=3\cdot(2^n-1),
\end{equation}
and from \cite[Lemma 7]{Nyb95} we have  that for any nonzero $a\in\ff n$, $|\Delta_a|=4$.
Moreover, we know that $\ell_v$ is even for any nonzero $v\in\ff {n+1}$.
We divide the nonzero components of $F$ in the following two sets:\begin{align*}
    \mathcal B=&\{v\in\ff{n+1} : D_aF_v \mbox{ is balanced } \forall a\in\ff n\setminus\{0\} \},\\
    \mathcal{NB}=&\{v\in\ff{n+1} : v\not\in \mathcal B\cup\{0\} \}.
\end{align*}
The set $\mathcal B$ coincides with the set of bent components of $F$. Clearly, if $v\in \mathcal B$ then $\ell_v=0$.
We have therefore the following relation. \begin{align*}
 3\cdot(2^n-1)=&   \sum_{v\in\ff {n+1}\setminus\{0\}}(2^{\ell_v}-1)=\sum_{v\in \mathcal{NB}}(2^{\ell_v}-1)\\
 \ge&|\mathcal{NB}|\cdot(2^2-1)=3\cdot(2^{n+1}-1-|\mathcal B|).
\end{align*}
Hence we deduce $|\mathcal{NB}|\le 2^{n}-1$ ($|\mathcal B|\ge 2^{n+1}-2^n = 2^n$).
On the other side, we have that $|\mathcal{NB}|\ge2^n-1$.
This is deduced from   \cite[Theorem 10]{Nyb95}.
Indeed, each (nonzero) component of $F'$ has exactly one nonzero linear structure, with different linear structures for different components. Therefore $2^n-1$ nonzero linear structures belong to the hyperplane $T$, and the corresponding components are then non-bent also for the restricted map $F$.
So  $F$  must have exactly $|\mathcal B|=2^n$ bent components and $|\mathcal{NB}|=2^n-1$ non-bent components, from which we deduce, using Equation \eqref{eq:l_v}, that $\ell_v=2$ for any $v\in \mathcal{NB}$.

We consider then the relation studied in Theorem \ref{thm:plateaued} adapted to this setting,
\begin{equation}\label{eq:thm10}
    2^{2n}+\sum_{v\in\ff{n+1}\setminus\{0\}}\lambda_v^2(-1)^{v\cdot w},
\end{equation} for $w\in\ff{n+1}$.
Recall that $\lambda_v^2=2^{n+\ell_v}$, see e.g.\ \cite{Nyb95}, and set $\Omega_w=\{v\in \mathcal{NB} : v\cdot w=0\}$.
Note that for $w=0$ we have $\Omega_0=\mathcal{NB}$ and $|\Omega_0|>(2^n-1)/3$ for any $F$ as in the hypothesis. So we can restrict to the case $w\ne0$.

We can write relation \eqref{eq:thm10} as the following.
\begin{align*}
    \eqref{eq:thm10}=&  2^{2n}+\sum_{v\in\mathbb{F}_2^{n+1}\setminus\{0\}}2^{n+\ell_v}(-1)^{v\cdot w}=2^n\left(2^{n}+\sum_{v\in\mathbb{F}_2^{n+1}\setminus\{0\}}2^{\ell_v}(-1)^{v\cdot w}
    \right)\\
    =&2^n\left(2^{n}+\sum_{v\in\mathcal B}2^{\ell_v}(-1)^{v\cdot w}+\sum_{v\in \mathcal{NB}}2^{\ell_v}(-1)^{v\cdot w}
    \right)\\
    =&2^n\left(2^{n}+\sum_{v\in\mathcal B}(-1)^{v\cdot w}+4\sum_{v\in \mathcal{NB}}(-1)^{v\cdot w}
    \right)=2^n\left(2^{n}+\sum_{v\in\mathbb{F}_2^{n+1}\setminus\{0\}}(-1)^{v\cdot w}+3\sum_{v\in \mathcal{NB}}(-1)^{v\cdot w} \right)\\
    =&2^n\left(2^{n}-1+3\sum_{v\in \mathcal{NB}}(-1)^{v\cdot w} \right)=2^n\left(2^{n}-1+3|\Omega_w|-3(2^n-1-|\Omega_w|)\right)\\
    =&2^n(6\cdot|\Omega_w|-2(2^n-1))=2^{n+1}(3\cdot|\Omega_w|-(2^n-1)).
\end{align*}

From the proof of  Theorem \ref{thm:plateaued}, we know that any functions as in the hypothesis is such that the quantity in \eqref{eq:thm10} is non-negative.
Therefore for any $w\in\ff{n+1}$, $w\ne0$, $|\Omega_w|\ge\frac{2^n-1}{3}$.
Moreover, the function is a D-function if for any such $w$ $|\Omega_w|>\frac{2^n-1}{3}$.
\end{proof}

To conclude this part, we consider the case of Gold APN functions (in their univariate representation) $F'(X)=X^{2^i+1}\in\FF{n+1}$, with $n$ even and $i$ coprime with $n+1$.
We then restrict $F$ to the elements of $T_0=\{ x\in\FF{n+1} : Tr_{n+1}(x)=0\}$. 
It is possible to consider any hyperplane $H_\alpha=\{x\,:\, Tr_{n+1}(\alpha x)=0 \}$, since for a power function $G(X)=X^d$ we have $G_{|H_\beta}=G_{|H_\alpha}(\frac{\alpha}{\beta}X)=(\frac{\alpha}{\beta})^d G_{|H_\alpha}(X)$.

Therefore, $F:T_0\rightarrow\FF{n+1}$ ($F=F'_{|T_0}$) is a  strongly plateaued APN $(n,n+1)$-function.

Given $a\in T_0^*$, $D_aF'(aX)=a^{2^i+1}(X^{2^i}+X+1)$.
Note that, if $a$ is a linear structure for $Tr_{n+1}(vF'(X))$ ($v\ne0$), then it is a linear structure also for $Tr_{n+1}(vF(X))$.
In particular, $Tr_{n+1}(vD_aF'(X))=Tr_{n+1}(vD_aF'(aX))=Tr_{n+1}(va^{2^i+1})$.  
Therefore, $Tr_{n+1}(va^{2^i+1}(X^{2^i}+X))=Tr_{n+1}((va^{2^i+1}+v^{2^i}a^{2^i(2^i+1)})X^{2^i})$ is the constant zero if and only if $va^{2^i+1}+v^{2^i}a^{2^i(2^i+1)}=0$, that is, $(va^{2^i+1})^{2^i-1}=1$ or equivalently $va^{2^i+1}=1$. Recall indeed that $a,v\ne0$ and $\gcd(i,n+1)=1$.
So  $v=1/a^{2^i+1}$ and
 $$\mathcal{NB}=\{1/a^{2^i+1} : a\in T_0^*\}.$$
 The equality holds since $X^{2^i+1}$ permutes $\FF{n+1}$ and the non-bent components of $F$ are $2^n-1$.

Now $\Omega_w=\{v\in \mathcal{NB} : Tr_{n+1}( v w)=0\}=\{1/a^{2^i+1} : a\in T_0^* \mbox{ s.t. } Tr_{n+1}(w/a^{2^i+1})=0\}$.
We further study the elements in $\Omega_w$.
We have that $Tr_{n+1}(w/a^{2^i+1})=0$ implies the existence of $x\in\FF{n+1}$ such that $w/a^{2^i+1}=x^{2^i}+x$.

So we have $w=a^{2^i+1}(x^{2^i}+x)=D_aF'(ax)+F'(a)$.
 Therefore $w\in {\tt Im}(D_aF'+F'(a))$ and $\Omega_w=\{1/a^{2^i+1} : a\in T_0^* \mbox{ s.t. }  w\in {\tt Im}(D_aF'+F'(a))\}$,
 where with ${\tt Im}(G)$ we identify the set of images of the function $G$.

From the APN property of $F'$, $|{\tt Im}(D_aF'+F'(a))|=|{\tt Im}(D_aF')|=2^{n}$.
So
\begin{align*}
    \sum_{w\in\FF{n+1}^*}|\Omega_w|=&\sum_{a\in T_0^*} |{\tt Im}(D_aF'+F'(a))|=\sum_{a\in T_0^*}|{\tt Im}(D_aF')|=2^{n}(2^n-1).
\end{align*}

Suppose now that the studied function $F$ has the D-property.
This implies, from the above theorem, that
 $|\Omega_w|>(2^n-1)/3$ for every nonzero $w$, so $|\Omega_w|\ge(2^n-1)/3+1$. From this we have $2^{n}(2^n-1)= \sum_{w\in\FF{n+1}^*}|\Omega_w|\ge(2^{n+1}-1)[(2^n-1)/3+1]$, 
 from which we obtain $(2^{2n}-1)\ge 3(2^{n+1}-1)$.
Notice that for $n=2$ the inequality is not satisfied: $2^{2n}-1=15$ and $3\cdot(2^{n+1}-1)=21$.
This leads to the following result.
\begin{prop}
    For $n=2$ consider the APN function $F':\FF{n+1}\rightarrow\FF{n+1}$ $F'(X)=X^{2^i+1}$,  $i$ coprime with $n+1$ (i.e.\ $i=1,2$), restricted to a hyperplane $H_\alpha=\{x:\,Tr_{n+1}(\alpha x)=0\}$ ($\alpha\ne 0$), $F=F'_{|H_\alpha}$.
    Then $F$ is not a D-function.    
\end{prop}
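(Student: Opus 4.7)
The plan is to specialize the inequality $(2^{2n}-1) \ge 3(2^{n+1}-1)$ established in the preceding discussion to $n=2$ and observe that it is violated. Recall how that inequality arises: if $F$ were a D-function, Theorem \ref{thm: strongly pl} would force $|\Omega_w| > (2^n-1)/3$ for every $w \in \FF{n+1}$; since $n$ is even, $(2^n-1)/3$ is an integer, so the strict inequality tightens to the integer bound $|\Omega_w| \ge (2^n-1)/3 + 1$. Summing over the $2^{n+1}-1$ nonzero values of $w$ and inserting the identity $\sum_{w \in \FF{n+1}^*}|\Omega_w| = 2^n(2^n-1)$ computed above for the Gold restriction yields
\begin{equation*}
2^n(2^n-1) \ge (2^{n+1}-1)\left[\frac{2^n-1}{3} + 1\right],
\end{equation*}
which, after clearing denominators, rearranges to $2^{2n}-1 \ge 3(2^{n+1}-1)$.

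For $n=2$, so $n+1=3$ and the admissible Gold exponents coprime with $3$ are exactly $i=1,2$, the left-hand side evaluates to $2^{4}-1 = 15$ while the right-hand side is $3(2^{3}-1) = 21$. Since $15 \ge 21$ is false, the standing assumption that $F$ satisfies the D-property must be discarded. Hence $F = F'_{|H_\alpha}$ cannot be a D-function, for any nonzero $\alpha \in \FF{3}$.

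There is essentially no obstacle in this argument: the proof reduces to the verification of a numerical inequality at a single value of $n$. The only prerequisites worth confirming are that the Gold map $F'(X) = X^{2^i+1}$ on $\FF{3}$ is indeed strongly plateaued and APN (it is quadratic, hence plateaued with all components partially-bent, and APNness for $\gcd(i,n+1)=1$ is classical), and that the preceding derivation applies uniformly to the restriction to an arbitrary hyperplane $H_\alpha$ rather than only to $T_0$; this is precisely the observation, made earlier for power maps, that $F'_{|H_\beta}$ and $F'_{|H_\alpha}$ differ by a nonzero scaling, under which the D-property is invariant by Proposition \ref{prop: EA invariance}.
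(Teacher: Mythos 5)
Your proposal is correct and follows essentially the same route as the paper: the paper derives the inequality $2^{2n}-1\ge 3(2^{n+1}-1)$ from Theorem \ref{thm: strongly pl} together with the count $\sum_{w}|\Omega_w|=2^n(2^n-1)$, and concludes by observing that $15\ge 21$ fails for $n=2$. Your added remarks (integrality of $(2^n-1)/3$ for $n$ even, and the reduction from $H_\alpha$ to $T_0$ via scaling and EA-invariance) are correct and merely make explicit what the paper leaves implicit.
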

    
\section{Connection with higher-order differentiability}

As we have seen in Section \ref{sec:2}, Dillon's property is related to the values of the second-order derivatives. 
Lai, in \cite{LAI}, introduced the idea of higher-order differential cryptanalysis which is a generalization of (first-order) differential cryptanalysis.
Later, in  \cite{K95} Knudsen used the higher-order differential cryptanalysis to attack ciphers that were proved to be secure against conventional differential attacks. 
For more details on (higher-order) differential attacks see \cite{BLMN,K95_2,TMM} and the reference therein. 

Recently, Tang \emph{et al.} \cite{TMM} investigate the second-order differential spectrum of the inverse functions. Following the notation in \cite{TMM}, we have
$$\mathcal N_F(\gamma,\eta,\omega)=|\{x\in\ff n : F(x)+F(x+\gamma)+F(x+\eta)+F(x+\gamma+\eta)=\omega\}|,$$
for $\gamma,\eta\in\ff n$ and $\omega\in\ff m$, while $F$ is an $(n,m)$-function. 
Therefore, we can define the second-order differential spectrum as the multiset
$$
[N_F(\gamma,\eta,\omega)\,:\,\gamma,\eta \in\ff{n}\setminus\{0\},\omega\in\ff{m} ].
$$

Hence, the D-property can be formulated in terms of this notion as follows.
\begin{theorem} Let $m\ge n$.
    An $(n,m)$-function $F$ has the D-property if and only if for any $\omega\in\ff m\setminus\{0\}$ there exist $\gamma,\eta\in\ff n$ such that  $\mathcal N_F(\gamma,\eta,\omega)>0$.
\end{theorem}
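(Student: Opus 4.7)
The plan is to observe that this theorem is essentially a reparametrization of the defining equation of the D-property, translated into the language of the second-order differential spectrum of \cite{TMM}. The whole content lies in a change of variables: given $(x,y,z)\in(\ff n)^3$, set $\gamma=x+y$ and $\eta=x+z$, so that $y=x+\gamma$, $z=x+\eta$, and $x+y+z=x+\gamma+\eta$. Since the map $(x,y,z)\mapsto(x,\gamma,\eta)$ is a bijection of $(\ff n)^3$, the D-property set can be rewritten as
\begin{equation*}
\{F(x)+F(y)+F(z)+F(x+y+z)\,:\,x,y,z\in\ff n\}=\{F(x)+F(x+\gamma)+F(x+\eta)+F(x+\gamma+\eta)\,:\,x,\gamma,\eta\in\ff n\}.
\end{equation*}

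Next I would note that an element $\omega\in\ff m$ lies in the right-hand side if and only if there exist $x,\gamma,\eta\in\ff n$ such that $F(x)+F(x+\gamma)+F(x+\eta)+F(x+\gamma+\eta)=\omega$, which, by the definition of $\mathcal N_F$, is equivalent to the existence of $\gamma,\eta\in\ff n$ with $\mathcal N_F(\gamma,\eta,\omega)>0$. Finally, the value $\omega=0$ is always attained trivially, for instance by taking $\gamma=0$ (or $\eta=0$, or $\gamma=\eta$), since in these degenerate cases the sum collapses to $0$ in characteristic $2$. Consequently, the equality of the D-property set with the whole of $\ff m$ reduces to requiring the condition $\mathcal N_F(\gamma,\eta,\omega)>0$ for some $\gamma,\eta$ only for the nonzero values $\omega\in\ff m\setminus\{0\}$, which is exactly the statement of the theorem. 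There is no real obstacle here; the only thing to be careful about is that the theorem is phrased in terms of $\omega\ne 0$, so the $\omega=0$ case must be handled separately as above to justify the restriction in the statement.
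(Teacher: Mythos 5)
Your proposal is correct and matches the paper's (implicit) reasoning: the paper states this theorem as an immediate reformulation of the D-property via exactly the change of variables $\gamma=x+y$, $\eta=x+z$ that you make explicit, and your separate treatment of $\omega=0$ is the right way to justify restricting to $\omega\in\ff m\setminus\{0\}$.
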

Moreover, notice that we can restrict the search to nonzero $\gamma,\eta\in\ff n$ with $\gamma\ne\eta$.

As a byproduct of the analysis performed in the previous sections, we obtain the following relation between the fourth moment of the Walsh transform of a  function and its second-order differential spectrum.
\begin{theorem}
    Let $m\ge n$ and $F:\ff n\rightarrow\ff m$.
    Then for every $b\in\ff m\setminus\{0\}$ we have
    $$\sum_{(u,v)\in\ff n\times\ff m}(-1)^{v\cdot b}\mathcal{W}_F^4(u,v)=2^{n+m}\sum_{\gamma,\eta\in\ff n}\mathcal N_F(\gamma,\eta,b).$$
\end{theorem}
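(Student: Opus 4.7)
The plan is to combine Theorem \ref{thm:4th moment} with an elementary change of variables on $(\ff n)^3$. From the computation carried out in the proof of Theorem \ref{thm:4th moment}, the hypothesis $b\ne0$ is not used: the identity
\begin{equation*}
\sum_{(u,v)\in\ff n\times\ff m}(-1)^{v\cdot b}\mathcal W_F^4(u,v)=2^{n+m}\bigl|\{(x,y,z)\in(\ff n)^3 : F(x)+F(y)+F(z)+F(x+y+z)=b\}\bigr|
\end{equation*}
holds for every $b\in\ff m$. This takes care of the left-hand side.

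Next, I would rewrite the right-hand side by unfolding the definition of $\mathcal N_F$. Summing
\begin{equation*}
\mathcal N_F(\gamma,\eta,b)=\bigl|\{x\in\ff n : F(x)+F(x+\gamma)+F(x+\eta)+F(x+\gamma+\eta)=b\}\bigr|
\end{equation*}
over all $(\gamma,\eta)\in(\ff n)^2$ amounts to counting triples, namely
\begin{equation*}
\sum_{\gamma,\eta\in\ff n}\mathcal N_F(\gamma,\eta,b)=\bigl|\{(x,\gamma,\eta)\in(\ff n)^3 : F(x)+F(x+\gamma)+F(x+\eta)+F(x+\gamma+\eta)=b\}\bigr|.
\end{equation*}

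Finally, the involutive bijection $(\ff n)^3\to(\ff n)^3$ defined by $(x,y,z)\mapsto(x,x+y,x+z)$ identifies the two triple sets. Indeed, setting $\gamma=x+y$ and $\eta=x+z$ gives $x+y+z=x+\gamma+\eta$ in characteristic $2$, so the defining equation $F(x)+F(y)+F(z)+F(x+y+z)=b$ transforms precisely into $F(x)+F(x+\gamma)+F(x+\eta)+F(x+\gamma+\eta)=b$. Multiplying the resulting equality of cardinalities by $2^{n+m}$ delivers the claimed formula.

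There is no substantial obstacle here: once Theorem \ref{thm:4th moment} is invoked, the argument is a one-line reparametrisation, and the only thing to double-check is that the map $(x,y,z)\mapsto(x,x+y,x+z)$ is a bijection of $(\ff n)^3$ (which is immediate, as it is its own inverse).
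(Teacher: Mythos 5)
Your proposal is correct and matches the paper's own argument: the paper likewise reuses the computation from Theorem \ref{thm:4th moment} and then identifies the triple count with $\sum_{\gamma,\eta}\mathcal N_F(\gamma,\eta,b)$ via exactly the substitution $\gamma=x+y$, $\eta=x+z$. Your extra remark that the identity from Theorem \ref{thm:4th moment} does not actually use $b\ne0$ is a correct and harmless observation.
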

\begin{proof}
    Following the proof of Theorem \ref{thm:4th moment} we have
    \begin{align*}
         \sum_{(u,v)\in\ff n\times\ff m}&(-1)^{v\cdot b}\mathcal W_{F}^4(u,v)\\
     &=2^{n+m}\cdot |\{(x,y,z)\in (\ff n)^3 : F(x)+F(y)+F(z)+F(x+y+z)=b\}|.
    \end{align*} 
    The proof follows from  observing that
    $|\{(x,y,z)\in (\ff n)^3 : F(x)+F(y)+F(z)+F(x+y+z)=b\}|=|\{(x,\gamma,\eta)\in (\ff n)^3 : F(x)+F(x+\gamma)+F(x+\eta)+F(x+\gamma+\eta)=b\}|=\sum_{\gamma,\eta\in\ff n}\mathcal N_F(\gamma,\eta,b)$.
\end{proof}
A similar result is derived for quadratic functions, connecting
 the third moment of their Walsh transform  and their second-order differential spectrum.

\begin{theorem} Let $m\ge n$.
    Given $F$ a quadratic  $(n,m)$-function such that $F(0)=0$,
    then
    $$\sum_{(u,v)\in\ff n\times\ff m}\mathcal W^3_{F+b}(u,v)=2^m\sum_{\gamma,\eta\in\ff n}\mathcal N_F(\gamma,\eta,b)$$
     for every $b\in\ff m$.
\end{theorem}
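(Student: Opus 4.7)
The plan is to re-run the Walsh-moment computation that appeared in the proof of Theorem \ref{Theorem4}, stopping one step earlier (i.e.\ without specializing to the two cases $b=0$ or $b\ne0$), and then convert the resulting count into the summed second-order differential spectrum by exploiting the fact that second-order derivatives of a quadratic function are constant in the base point.

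First, I would expand
\begin{equation*}
\sum_{(u,v)\in\ff n\times\ff m}\mathcal{W}^3_{F+b}(u,v)=\sum_{u,v}\sum_{x,y,z\in\ff n}(-1)^{v\cdot(F(x)+F(y)+F(z)+b)+u\cdot(x+y+z)},
\end{equation*}
sum over $u$ to produce the Kronecker delta forcing $z=x+y$ (with a factor $2^n$), and then sum over $v$ to produce the delta on $F(x)+F(y)+F(x+y)=b$ (with a factor $2^m$). This is exactly the manipulation leading to \eqref{sumcubes_proofFormula} in the proof of Theorem \ref{Theorem4}, and yields
\begin{equation*}
\sum_{(u,v)\in\ff n\times\ff m}\mathcal{W}^3_{F+b}(u,v)=2^{n+m}\cdot\bigl|\{(x,y)\in(\ff n)^2:F(x)+F(y)+F(x+y)=b\}\bigr|.
\end{equation*}

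Next, I would use that $F$ is quadratic with $F(0)=0$, so for every $\gamma,\eta\in\ff n$ the second-order derivative $D^2_{\gamma,\eta}F$ is the constant map with value $F(0)+F(\gamma)+F(\eta)+F(\gamma+\eta)=F(\gamma)+F(\eta)+F(\gamma+\eta)$. Consequently $\mathcal N_F(\gamma,\eta,b)$ equals $2^n$ when $F(\gamma)+F(\eta)+F(\gamma+\eta)=b$ and $0$ otherwise, giving
\begin{equation*}
\sum_{\gamma,\eta\in\ff n}\mathcal N_F(\gamma,\eta,b)=2^n\cdot\bigl|\{(\gamma,\eta)\in(\ff n)^2:F(\gamma)+F(\eta)+F(\gamma+\eta)=b\}\bigr|.
\end{equation*}

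Combining the two displayed identities produces the claimed equality $\sum_{(u,v)}\mathcal{W}^3_{F+b}(u,v)=2^m\sum_{\gamma,\eta}\mathcal N_F(\gamma,\eta,b)$ for every $b\in\ff m$. There is no real obstacle here: the only subtle point is making sure that the cardinality obtained from the Walsh computation (counting ordered pairs $(x,y)$) is correctly identified with $2^{-n}\sum_{\gamma,\eta}\mathcal N_F(\gamma,\eta,b)$, which is precisely where the hypothesis $F(0)=0$ and the quadratic assumption intervene. Note also that the normalization $F(0)=0$ is not a loss of generality thanks to Proposition \ref{prop: EA invariance}, although here it is genuinely needed to identify $F(x)+F(y)+F(x+y)$ with $D^2_{x,y}F(0)$.
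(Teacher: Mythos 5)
Your proposal is correct and follows essentially the same route as the paper: it reuses the third-moment computation from the proof of Theorem \ref{Theorem4} to get $\sum_{(u,v)}\mathcal W^3_{F+b}(u,v)=2^{n+m}\,|\{(x,y):F(x)+F(y)+F(x+y)=b\}|$, and then identifies $\sum_{\gamma,\eta}\mathcal N_F(\gamma,\eta,b)$ with $2^n$ times the same cardinality via the constancy of second-order derivatives of a quadratic function and the normalization $F(0)=0$. No gaps.
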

\begin{proof}
    From the proof of Theorem \ref{Theorem4},  for any  function $F$ we  have
     $$\sum_{(u,v)\in\ff n\times\ff m}\mathcal W^3_{F+b}(u,v)=2^{n+m}\cdot|\{ (\gamma,\eta)\in(\ff n)^2
 : F(\gamma)+F(\eta)+F(\gamma+\eta)=b\}|.$$
 To conclude, we notice that for $F$ quadratic and normalized, we obtain
 \begin{align*} \sum_{\gamma,\eta\in\ff n}\mathcal N_F(\gamma,\eta,b)=
2^{n}\cdot|\{ (\gamma,\eta)\in(\ff n)^2
 : F(\gamma)+F(\eta)+F(\gamma+\eta)=b\}|. 
 \end{align*}
\end{proof}

As well, for the case of plateaued functions, we obtain a relation between the amplitudes of their components and the second-order differential spectrum.

 \begin{theorem}
      For $m\ge n$,
     let $F$ be a plateaued $(n,m)$-function.
     Then $$\sum_{v\in\ff m\setminus\{0\}}\lambda_v^2(-1)^{v\cdot w}=2^{m-n}\cdot \sum_{(\gamma,\eta)\in(\ff n)^2}\mathcal N_F(\gamma,\eta,w)-2^{2n},$$
     for any $w\in\ff m$ and with $\lambda_v$ the amplitude of the plateaued component $F_v$.
 \end{theorem}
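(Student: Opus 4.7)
The plan is to adapt the Fourier-Hadamard machinery introduced in the proof of Theorem \ref{thm:plateaued} and to reinterpret the count of second-order derivatives hitting $w$ in terms of $\mathcal N_F$.

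First I would reintroduce the pseudo-Boolean function
\[
\Lambda(v)=\sum_{a,b\in\ff n}(-1)^{D^2_{a,b}F_v(x)},
\]
which, since $F$ is plateaued, is independent of $x$ and satisfies $\Lambda(v)=\lambda_v^2$ for $v\ne 0$ and $\Lambda(0)=2^{2n}$, exactly as in the proof of Theorem \ref{thm:plateaued}. Applying the Fourier-Hadamard transform at $w$ and swapping the order of summation as in that proof yields the two identities
\[
\mathcal F(\Lambda)(w)=2^{2n}+\sum_{v\in\ff m\setminus\{0\}}\lambda_v^2(-1)^{v\cdot w}
\]
and
\[
\mathcal F(\Lambda)(w)=2^{m}\cdot\bigl|\{(a,b)\in(\ff n)^2:D^2_{a,b}F(x)=w\}\bigr|.
\]

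Next I would translate the right-hand cardinality into the second-order differential spectrum. By definition, $\mathcal N_F(\gamma,\eta,w)=|\{x\in\ff n:D^2_{\gamma,\eta}F(x)=w\}|$, so switching the order of summation gives
\[
\sum_{(\gamma,\eta)\in(\ff n)^2}\mathcal N_F(\gamma,\eta,w)=\sum_{x\in\ff n}\bigl|\{(\gamma,\eta):D^2_{\gamma,\eta}F(x)=w\}\bigr|.
\]
The key observation (already implicit in the plateaued proof) is that the inner cardinality does not depend on $x$, so the sum equals $2^n\cdot|\{(a,b):D^2_{a,b}F(0)=w\}|$. Substituting this back into the second expression for $\mathcal F(\Lambda)(w)$ gives
\[
\mathcal F(\Lambda)(w)=2^{m-n}\sum_{(\gamma,\eta)\in(\ff n)^2}\mathcal N_F(\gamma,\eta,w),
\]
and equating with the first expression and rearranging yields the claimed identity.

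There is essentially no obstacle here; the one point that needs to be checked carefully is the $x$-independence of $|\{(a,b):D^2_{a,b}F(x)=w\}|$, but this is precisely the content of the plateauedness assumption (as used in the proof of Theorem \ref{thm:plateaued}). All other steps are routine manipulations of the Fourier-Hadamard transform and a reindexing of the second-order difference count.
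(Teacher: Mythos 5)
Your proposal is correct and follows essentially the same route as the paper: both rely on the identity $2^{2n}+\sum_{v\ne 0}\lambda_v^2(-1)^{v\cdot w}=2^m\,\bigl|\{(a,b):D^2_{a,b}F(x)=w\}\bigr|$ established in the proof of Theorem \ref{thm:plateaued}, and then use the $x$-independence of that cardinality (a consequence of plateauedness) to identify it with $2^{-n}\sum_{(\gamma,\eta)}\mathcal N_F(\gamma,\eta,w)$. The only cosmetic difference is that you re-derive the Fourier--Hadamard computation of $\mathcal F(\Lambda)(w)$ rather than citing it.
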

 \begin{proof}
     From the proof of  Theorem \ref{thm:plateaued}, for $F$ plateaued we have 
     \begin{align*}
         2^{2n}&+\sum_{v\in\ff m\setminus\{0\}}\lambda_v^2(-1)^{v\cdot w}\\
         &=2^m\cdot|\{(\gamma,\eta)\in(\ff n)^2
 : F(x)+F(\gamma+x)+F(\eta+x)+F(\gamma+\eta+x)=w\}|,
     \end{align*}
     where the value on the second line is independent from the choice of $x$.
     We conclude the proof by noticing that, since $F$ is plateaued,
     \begin{align*}
         &\sum_{(\gamma,\eta)\in(\ff n)^2}\mathcal N_F(\gamma,\eta,w)\\
         &\ =\sum_{(\gamma,\eta)\in(\ff n)^2}|\{x\in\ff n : F(x)+F(x+\gamma)+F(x+\eta)+F(x+\gamma+\eta)=w\}|\\
         &\ =\sum_{x\in\ff n}|\{(\gamma,\eta)\in(\ff n)^2 : F(x)+F(x+\gamma)+F(x+\eta)+F(x+\gamma+\eta)=w\}|\\
        &\ 
         =2^n\cdot|\{(\gamma,\eta)\in(\ff n)^2
 : F(x)+F(\gamma+x)+F(\eta+x)+F(\gamma+\eta+x)=w\}|.
     \end{align*}
 \end{proof}

\end{document}